\theoremstyle{thmstyleone}%
\newtheorem{theorem}{Theorem}
\newtheorem{proposition}[theorem]{Proposition}%
\theoremstyle{thmstyletwo}%
\theoremstyle{thmstylethree}%
\newtheorem{definition}{Definition}%
\newcolumntype{C}[1]{>{\centering\arraybackslash}p{#1}}
\DeclareMathAlphabet{\mathcal}{OMS}{cmsy}{m}{n}
\def\eg{\emph{e.g.,\ }}
\def\etal{\emph{et al.}}
\def\ie{\emph{i.e.\ }}
\newcommand{\QED}{}
\newcommand{\tetrahedron}{
  \mathchoice
    {\includegraphics[height=2.0ex]{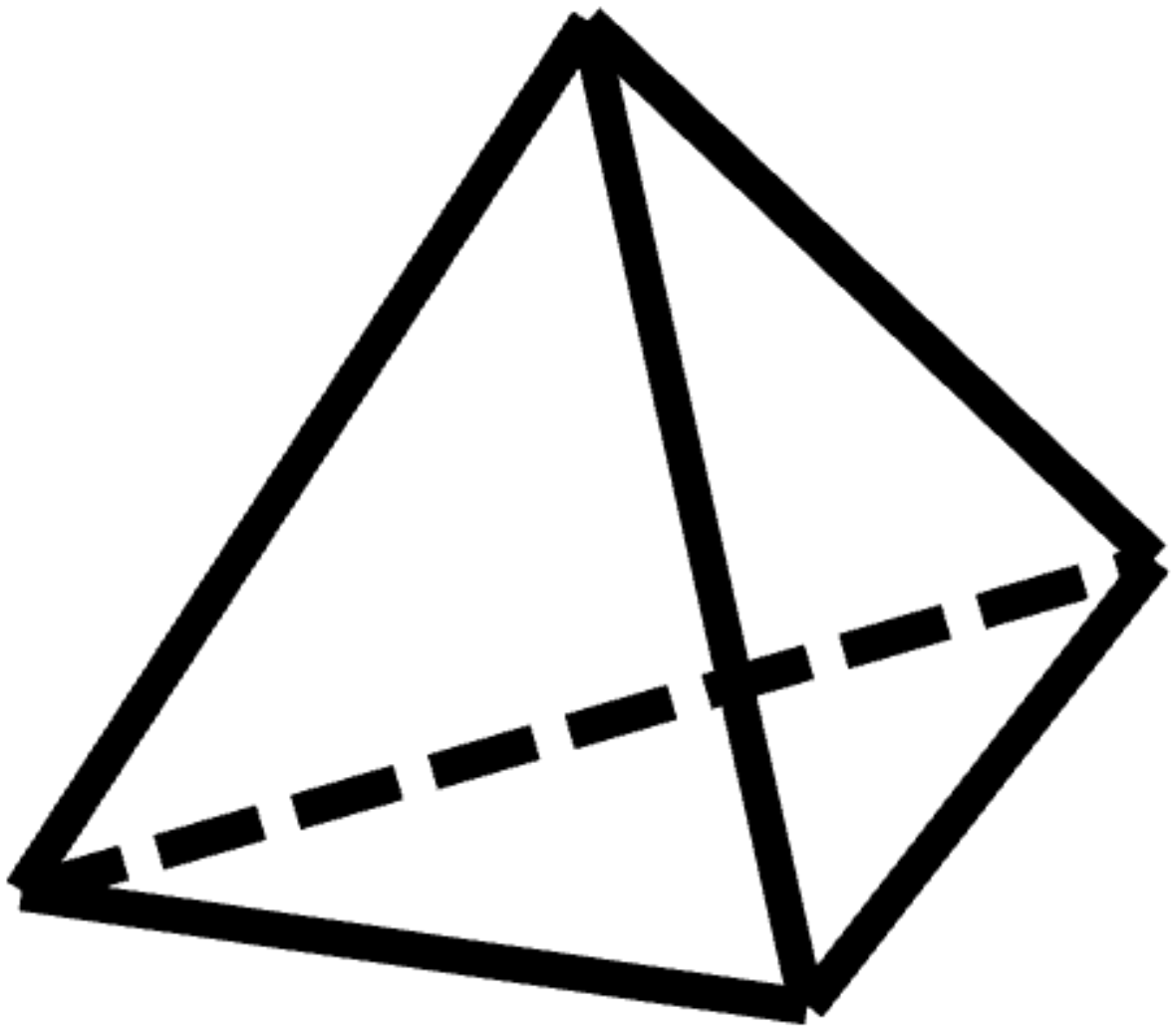}\,} 
    {\includegraphics[height=2.0ex]{Fig0.pdf}\,} 
    {\includegraphics[height=1.5ex]{Fig0.pdf}\,} 
    {\includegraphics[height=1ex]{Fig0.pdf}\,} 
}
\newcommand{\directline}[1]{{#1}}
\newcommand{\jacdet}{\vert J\vert}
\definecolor{Gray}{gray}{0.9}
\begin{document}

\title[\hspace*{330pt}On Finite Difference Jacobian Computation in~Deformable~Image ~Registration]{On Finite Difference Jacobian Computation in~Deformable~Image ~Registration}


\author*[1]{\fnm{Yihao} \sur{Liu}}\email{yliu236@jhu.edu}

\author[2]{\fnm{Junyu} \sur{Chen}}\email{jchen245@jhmi.edu}

\author[1]{\fnm{Shuwen} \sur{Wei}}\email{swei14@jhu.edu}

\author[1]{\fnm{Aaron} \sur{Carass}}\email{\\aaron\_carass@jhu.edu}

\author[1]{\fnm{Jerry} \sur{Prince}}\email{prince@jhu.edu}

\affil[1]{\orgdiv{Department of Electrical and Computer Engineering}, \orgname{Johns Hopkins University}, \orgaddress{\city{Baltimore}, \state{Maryland}, \country{USA}}}

\affil[2]{\orgdiv{Russell H. Morgan Department of Radiology and Radiological Science}, \orgname{Johns Hopkins Medical Institutes}, \orgaddress{\city{Baltimore}, \state{Maryland}, \country{USA}}}


\abstract{
Producing spatial transformations that are diffeomorphic is a key goal in deformable image registration.
As a diffeomorphic transformation should have positive
Jacobian determinant~$\jacdet$ everywhere, the number of voxels with $\jacdet<0$ has been used to test for diffeomorphism and also to measure the irregularity of the transformation.
For digital transformations, $\jacdet$ is commonly approximated using a central difference, but this strategy can yield positive $\jacdet$'s for transformations that are clearly not diffeomorphic---even at the voxel resolution level.
To show this, we first investigate the geometric meaning of different finite difference approximations of $\jacdet$.
We show that to determine if a deformation is diffeomorphic for digital images, the use of any individual finite difference approximation of $\jacdet$ is insufficient.
We further demonstrate that for a 2D transformation, four unique finite difference approximations of $\jacdet$'s must be positive to ensure that the entire domain
is invertible and free of folding at the pixel level.
For a 3D transformation, ten unique finite differences
approximations of $\jacdet$'s are required to be positive.
Our proposed \textit{digital diffeomorphism} criteria solves several errors inherent in the central difference approximation of $\jacdet$ and accurately detects non-diffeomorphic digital transformations. The source code of this work is available at \url{https://github.com/yihao6/digital_diffeomorphism}.
}

\keywords{Deformable Registration, Non-rigid registration, Digital Diffeomorphism, Finite Difference, Interpolation, Jacobian Determinants}



\maketitle

\section{Introduction}
\label{s:intro}

The goal of deformable image registration is to establish a nonlinear spatial transformation that aligns two images. 
For many tasks, it is reasonable to assume that the anatomy in the two images share the same topology.
Therefore, for deformable registration algorithms, the ability to produce a topology preserving transformation is preferred.
Following the work of Christensen \etal~\citep{christensen1994deformable}, many registration algorithms either penalize~\citep{avants2008symmetric, chen2017mia, mok2020fast} or constrain~\citep{beg2005ijcv, chen2015boe, dalca2018unsupervised} their output transformations to be diffeomorphic and preserve topology.
In the continuous domain, a diffeomorphic transformation is a smooth and invertible mapping with a smooth inverse that is guaranteed to maintain the topology of the anatomy being transformed.
A diffeomorphic transformation should have positive Jacobian
determinant $\jacdet$ everywhere\footnote{We do not consider the case of all negative $\jacdet$, which would produce a reflection of the entire image.}, and testing if a transformation is diffeomorphic involves local computation of $\jacdet$.
When a transformation is not diffeomorphic, it is common to use the number of voxels with negative $\jacdet$~\citep{balakrishnan2019voxelmorph, chen2022transmorph, chen2021vit, liu2022coordinate} or the standard deviation of the logarithmic transformed $\jacdet$~\citep{hering2021learn2reg, kabus2009evaluation, johnson2002consistent} to measure the irregularity of the transformation.

Given a digital transformation that is defined on a regular grid, a widely accepted practice for computing the Jacobian is to use finite difference approximations of spatial derivatives.
There are three standard methods for computing finite differences along each axis.
We denote the forward, backward, and central differences that operate along the $x$ axis~(and similarly for the $y$ and $z$ axes) as $\mathcal{D}^{+x}$, $\mathcal{D}^{-x}$, and $\mathcal{D}^{0x}$, respectively.
To approximate $\jacdet$ in 2D or 3D transformations, either the same or a different type of finite difference can be used for different axes.
We denote the central difference approximation of $\jacdet$ in 2D and 3D as $\mathcal{D}^{0x} \mathcal{D}^{0y} \jacdet$ and $\mathcal{D}^{0x} \mathcal{D}^{0y} \mathcal{D}^{0z} \jacdet$, respectively.

Despite its current popularity, the central difference approximation of the Jacobian does not always work as expected in the evaluation of the diffeomorphism property of spatial transformations.
For example, Fig.~\ref{f:fails}(a) shows a transformation around center point $\bm{p}$ that is not diffeomorphic despite the fact that $\mathcal{D}^{0x} \mathcal{D}^{0y}\jacdet = 1$.
In fact, the transformation at $\bm{p}$ has no effect on the computation of $\mathcal{D}^{0x}\mathcal{D}^{0y}\jacdet(\bm{p})$,
even if $\bm{p}$ moves outside the field of view.
We call this the \emph{checkerboard problem} because the central difference based Jacobian computations on the transformations of the ``black'' and ``white'' pixels~(of a checkerboard) are independent of each other.  
\begin{figure}[!t]
    \centering
    \includegraphics[scale=1]{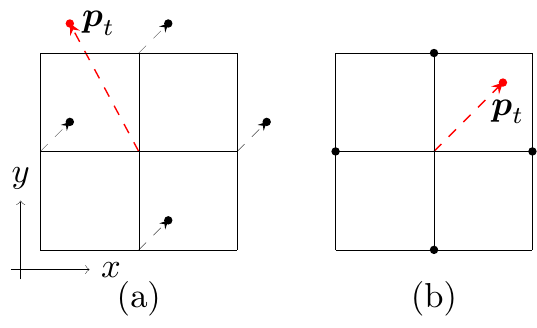}
    \caption{\textbf{(a)}~is an illustration of the \emph{checkerboard problem} for $D^{0x}D^{0y}\jacdet$. \textbf{(b)}~is a transformation that illustrates the inconsistency issue between $D^{-x}D^{-y}\jacdet$ and $D^{+x}D^{+y}\jacdet$.
    In both figures, the center point $\bm{p}$ is transformed to $\bm{p}^{\phantom{y}}_{t}$.
    The transformation around the center point is visualized as displacements~(shown as dotted arrows pointing toward solid dots) and the displacement for the center point is highlighted in red}
    \label{f:fails}
\end{figure}
A possible solution to this problem is to use $\mathcal{D}^{-x}\mathcal{D}^{-y}\jacdet$ or $\mathcal{D}^{+x}\mathcal{D}^{+y}\jacdet$ instead.
However, Fig.~\ref{f:fails}(b) shows an example in which $\mathcal{D}^{-x} \mathcal{D}^{-y} \jacdet$ and $\mathcal{D}^{+x} \mathcal{D}^{+y} \jacdet$ have opposite signs, leading to contradictory conclusions.
These examples illustrate the problem with naive application of finite differences in this application. We seek a better approach that still involves finite differences, but avoids these types of contradictory situations.  

In this work, we first investigate the geometric meaning of finite difference approximations of $\jacdet$.
We show that when using forward or backward differences, the sign of $\jacdet(\bm{p})$ determines if the underlying transformation $\mathcal{T}$ is invertible and orientation-preserving in a triangle~(2D) or tetrahedron~(3D) adjacent to $\bm{p}$.
Reversing the orientation indicates folding in space.
We formally define digital transformations that are globally invertible and free of folding as digital diffeomorphisms.
In order to determine if a transformation is a digital diffeomorphism, at each point it is necessary to consider four finite difference approximated $\jacdet$'s for 2D transformations and ten $\jacdet$'s for 3D transformations.
We also demonstrate that because of the \emph{checkerboard problem} and other errors that are inherent in the central difference based $\jacdet$, the number of non-diffeomorphic voxels it reports is always less than or equal to the actual number. 
Finally, we propose to use non-diffeomorphic area (2D) and volume (3D) as more meaningful measurements of irregularity in computed transformations.
\begin{figure*}[!t]
    \centering
    \includegraphics[scale=1]{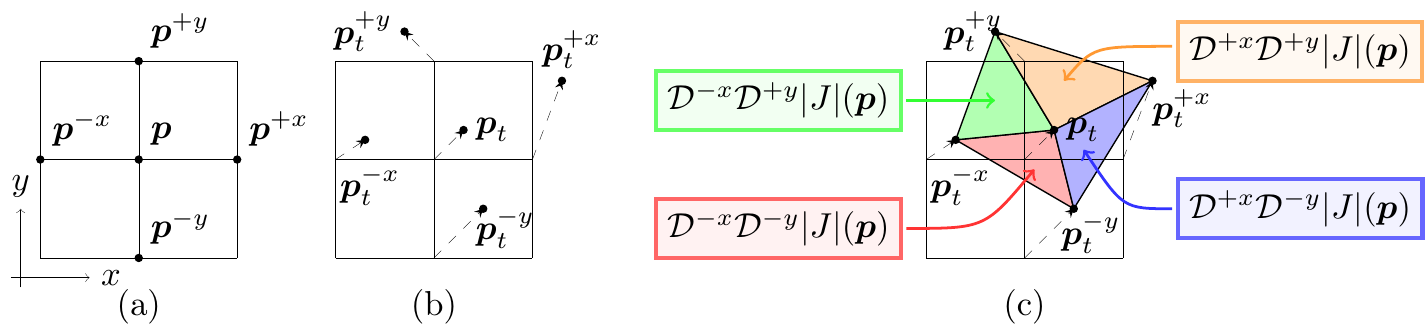}
    \caption{\textbf{(a)}~shows the notations for $\bm{p}$ and its 4-connected neighbors. \textbf{(b)}~shows $\bm{p}^{\phantom{y}}_{t}$, the transformed version of $\bm{p}$ as well as the transformed position of the 4-connected neighbors. \textbf{(c)}~shows the triangular regions and their corresponding forward and backward difference based $\jacdet$'s\label{f:notation_2d}}
\end{figure*}

\section{Methodology}
\label{s:methodology}
\subsection{Backward Difference Based Jacobian Determinant in 2D}
Consider the standard Euclidean space $\mathbb{R}^{2}$ that follows the right-hand rule.
Let $\mathcal{T}$ be a digital transformation for $\mathbb{R}^{2}$ that is defined for every grid point $\bm{p}$.
When using backward differences on both $x$ and $y$ axes for approximating $\jacdet$ at point $\bm{p}$, we have the formulation
\begin{equation}
\mathcal{D}^{-x}\mathcal{D}^{-y}\jacdet(\bm{p}) =
    \left\vert \begin{matrix}
    \mathcal{D}^{-x}\mathcal{T}_{x}(\bm{p})
    &
    \mathcal{D}^{-y}\mathcal{T}_{x}(\bm{p}) \\
    \mathcal{D}^{-x}\mathcal{T}_{y}(\bm{p})
    &
    \mathcal{D}^{-y}\mathcal{T}_{y}(\bm{p}) \\
\end{matrix}\right\vert,
\label{e:backward_diff}
\end{equation}
where $\mathcal{T}_{x}(\bm{p})$ and $\mathcal{T}_{y}(\bm{p})$ are the $x$ and $y$ components of  $\mathcal{T}(\bm{p})$.

We denote the 4-connected neighbors of $\bm{p}$ as $\bm{p}^{-x}, \bm{p}^{+x}, \bm{p}^{-y}$, and $\bm{p}^{+y}$, as shown in Fig.~\ref{f:notation_2d}(a). Their transformed locations are denoted with subscripts $t$, as shown in Fig.~\ref{f:notation_2d}(b). For example, $\bm{p}^{+x}_{t}:=\mathcal{T}(\bm{p}^{+x})$.
We denote the triangular region defined by the vectors $\directline{\bm{p}\bm{p}^{-x}}$ and $\directline{\bm{p}\bm{p}^{-y}}$ as $\triangle \bm{p}\bm{p}^{-x}\bm{p}^{-y}$, and we assume that the 2D transformation $\mathcal{T}$ is linearly interpolated on $\triangle \bm{p}\bm{p}^{-x}\bm{p}^{-y}$.
\begin{proposition}
    A 2D transformation $\mathcal{T}$ is invertible for $\triangle \bm{p}\bm{p}^{-x}\bm{p}^{-y}$ if and only if $\mathcal{D}^{-x}\mathcal{D}^{-y}\left\vert J\right\vert(\bm{p})$ for $\mathcal{T}$ is nonzero.
    \label{p:invertible_2d}
\end{proposition}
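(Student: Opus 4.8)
The plan is to use the hypothesis that $\mathcal{T}$ is linearly interpolated on $\triangle \bm{p}\bm{p}^{-x}\bm{p}^{-y}$, which makes the restriction $\mathcal{T}|_{\triangle}$ an \emph{affine} map. For an affine map, invertibility of the restriction is equivalent to nonvanishing of the determinant of its linear part, and the whole proof reduces to identifying that determinant with $\mathcal{D}^{-x}\mathcal{D}^{-y}\jacdet(\bm{p})$.

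First I would take unit grid spacing and write the backward differences explicitly, $\mathcal{D}^{-x}\mathcal{T}_{x}(\bm{p}) = \mathcal{T}_{x}(\bm{p}) - \mathcal{T}_{x}(\bm{p}^{-x})$ and likewise for the remaining three entries. Reading off the columns of the matrix in Eq.~\eqref{e:backward_diff}, I would observe that its first column is exactly the transformed edge vector $\directline{\bm{p}^{-x}_{t}\bm{p}_{t}}$ and its second column is $\directline{\bm{p}^{-y}_{t}\bm{p}_{t}}$.

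Next I would parametrize an arbitrary point of the triangle as $\bm{p} + s\,\directline{\bm{p}\bm{p}^{-x}} + t\,\directline{\bm{p}\bm{p}^{-y}}$ with $s,t \ge 0$ and $s+t \le 1$. Linear interpolation sends this point to $\bm{p}_{t} + s\,\directline{\bm{p}_{t}\bm{p}^{-x}_{t}} + t\,\directline{\bm{p}_{t}\bm{p}^{-y}_{t}}$, exhibiting $\mathcal{T}|_{\triangle}$ as an affine map whose linear part $L$ carries $\directline{\bm{p}\bm{p}^{-x}} = (-1,0)^{\top}$ and $\directline{\bm{p}\bm{p}^{-y}} = (0,-1)^{\top}$ to the corresponding transformed edge vectors. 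Solving for $L$ in the standard basis, the two sign flips cancel and its columns come out to be precisely $\directline{\bm{p}^{-x}_{t}\bm{p}_{t}}$ and $\directline{\bm{p}^{-y}_{t}\bm{p}_{t}}$; comparing with the previous step gives the key identity $\det L = \mathcal{D}^{-x}\mathcal{D}^{-y}\jacdet(\bm{p})$.

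Finally I would invoke the standard fact that an affine map $\bm{x} \mapsto L\bm{x} + \bm{b}$ is injective if and only if $\det L \neq 0$: if $\det L = 0$ the linear part collapses the plane onto a line or a point, so the triangle is flattened and $\mathcal{T}$ fails to be invertible there, whereas if $\det L \neq 0$ the map is a bijection of all of $\mathbb{R}^{2}$ and hence injective on the triangle. Combined with the identity $\det L = \mathcal{D}^{-x}\mathcal{D}^{-y}\jacdet(\bm{p})$, this yields both directions of the claimed equivalence. I expect the only delicate points to be the sign bookkeeping between the backward differences and the orientation of the edge vectors, and stating precisely that ``invertible for $\triangle$'' is to be read as injectivity of the affine restriction; the linear-algebra fact itself is elementary and is where all the geometric content sits.
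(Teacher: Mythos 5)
Your proposal is correct and follows essentially the same route as the paper: identify the columns of the backward-difference Jacobian matrix in Eq.~\eqref{e:backward_diff} with the transformed edge vectors of the triangle, recognize the restriction of $\mathcal{T}$ as an affine map with that linear part, and conclude invertibility iff the determinant is nonzero. Your treatment is in fact a bit more careful than the paper's about the sign bookkeeping (the paper writes the columns as $\directline{\bm{p}^{\phantom{y}}_{t}\bm{p}^{-x}_{t}}$ and $\directline{\bm{p}^{\phantom{y}}_{t}\bm{p}^{-y}_{t}}$, which differs from the actual backward-difference columns by a sign in each column that cancels in the $2\times2$ determinant), but the substance is identical.
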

\begin{proof}
    Since $\mathcal{T}$ is linearly interpolated on $\triangle \bm{p}\bm{p}^{-x}\bm{p}^{-y}$, $\mathcal{T}$ is linear for $\triangle \bm{p}\bm{p}^{-x}\bm{p}^{-y}$ and can be written as a $2\times2$ matrix with $\directline{\bm{p}^{\phantom{y}}_{t}\bm{p}^{-x}_{t}}$ and $\directline{\bm{p}^{\phantom{y}}_{t}\bm{p}^{-y}_{t}}$ as its columns. Thus, $\mathcal{T}$ is invertible if and only if $\directline{\bm{p}^{\phantom{y}}_{t}\bm{p}^{-x}_{t}}$ and $\directline{\bm{p}^{\phantom{y}}_{t}\bm{p}^{-y}_{t}}$ are linearly independent~(not colinear).
    $\mathcal{D}^{-x}\mathcal{D}^{-y}\jacdet(\bm{p})$ can be written as a triple product:
    \begin{equation}
        \mathcal{D}^{-x}\mathcal{D}^{-y}\jacdet(\bm{p})
        =
        (\directline{\bm{p}^{\phantom{y}}_{t}\bm{p}^{-x}_{t}}\times\directline{\bm{p}^{\phantom{y}}_{t}\bm{p}^{-y}_{t}})\cdot \bm{n},
        \label{e:signed_area_triangle}
    \end{equation}
    where $\bm{n}$ is the unit vector perpendicular to vectors $\directline{\bm{p}^{\phantom{y}}_{t}\bm{p}^{-x}_{t}}$ and $\directline{\bm{p}^{\phantom{y}}_{t}\bm{p}^{-y}_{t}}$ and is positively oriented following the right-hand rule. Therefore, $\mathcal{T}$ is invertible if and only if $\mathcal{D}^{-x} \mathcal{D}^{-y} \jacdet (\bm{p}) \neq 0$.\QED
\end{proof}
\begin{definition}
    A 2D transformation $\mathcal{T}$ is said to \textit{cause folding} of $\triangle \bm{p}\bm{p}^{-x}\bm{p}^{-y}$
    if the orientation of $\triangle \bm{p}\bm{p}^{-x}\bm{p}^{-y}$ is reversed by $\mathcal{T}$.
\end{definition}
\begin{proposition}
    A 2D transformation $\mathcal{T}$ is free of folding for $\triangle\bm{p}\bm{p}^{-x}\bm{p}^{-y}$ if and only if $\mathcal{D}^{-x}\mathcal{D}^{-y}\left\vert J\right\vert(\bm{p})$ for $\mathcal{T}$ is positive.
    \label{p:free_of_folding_2d}
\end{proposition}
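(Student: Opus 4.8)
The plan is to reduce the folding question entirely to the sign of the signed area that has already surfaced in Proposition~\ref{p:invertible_2d}. The orientation of a triangle is encoded by the sign of the triple product formed from its two edge vectors and the positively oriented unit normal $\bm{n}$, so the strategy is simply to compare this sign for the undeformed triangle against the sign for its image under $\mathcal{T}$, and to read off the result from Eq.~\eqref{e:signed_area_triangle}.

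First I would pin down the orientation of the \emph{undeformed} triangle. With $\mathbb{R}^2$ oriented by the right-hand rule, the edge vectors $\directline{\bm{p}\bm{p}^{-x}}$ and $\directline{\bm{p}\bm{p}^{-y}}$ point along the $-x$ and $-y$ directions, and a direct evaluation of $(\directline{\bm{p}\bm{p}^{-x}}\times\directline{\bm{p}\bm{p}^{-y}})\cdot\bm{n}$ yields a positive number. Hence $\triangle\bm{p}\bm{p}^{-x}\bm{p}^{-y}$ is positively oriented, and by the preceding definition, $\mathcal{T}$ is free of folding on this triangle exactly when the transformed triangle $\triangle\bm{p}^{\phantom{y}}_{t}\bm{p}^{-x}_{t}\bm{p}^{-y}_{t}$ is \emph{also} positively oriented (its orientation not reversed).

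Next I would invoke Eq.~\eqref{e:signed_area_triangle}, which already identifies $\mathcal{D}^{-x}\mathcal{D}^{-y}\jacdet(\bm{p})$ with the triple product $(\directline{\bm{p}^{\phantom{y}}_{t}\bm{p}^{-x}_{t}}\times\directline{\bm{p}^{\phantom{y}}_{t}\bm{p}^{-y}_{t}})\cdot\bm{n}$, that is, with the signed area of the transformed triangle relative to the same normal $\bm{n}$. Consequently the transformed triangle is positively oriented precisely when $\mathcal{D}^{-x}\mathcal{D}^{-y}\jacdet(\bm{p})>0$, and its orientation is reversed precisely when $\mathcal{D}^{-x}\mathcal{D}^{-y}\jacdet(\bm{p})<0$. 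Chaining this with the orientation equivalence from the previous paragraph gives the desired biconditional: $\mathcal{T}$ is free of folding on $\triangle\bm{p}\bm{p}^{-x}\bm{p}^{-y}$ if and only if $\mathcal{D}^{-x}\mathcal{D}^{-y}\jacdet(\bm{p})>0$.

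The step I expect to need the most care is the borderline value $\mathcal{D}^{-x}\mathcal{D}^{-y}\jacdet(\bm{p})=0$, where the transformed edge vectors are colinear and no orientation is well defined, so neither ``preserved'' nor ``reversed'' applies literally. I would dispose of this case by appealing to Proposition~\ref{p:invertible_2d}: a zero value is exactly the non-invertible, degenerate situation in which the triangle collapses to a segment. Since such a map does not preserve the positive orientation of the undeformed triangle, it cannot be counted as free of folding, which is why the statement correctly demands the strict inequality rather than $\geq 0$. Making this convention explicit is the one point where the argument is more than a routine sign computation.
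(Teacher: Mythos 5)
Your proof is correct and follows essentially the same route as the paper's: both establish that $\triangle\bm{p}\bm{p}^{-x}\bm{p}^{-y}$ is positively oriented under the right-hand rule and then read the orientation of the transformed triangle off the sign of the triple product in Eq.~\eqref{e:signed_area_triangle}. Your explicit treatment of the degenerate case $\mathcal{D}^{-x}\mathcal{D}^{-y}\jacdet(\bm{p})=0$ is a small refinement the paper leaves implicit, but it does not change the argument.
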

\begin{proof}By the right-hand rule, $\triangle\bm{p}\bm{p}^{-x}\bm{p}^{-y}$ is positively oriented.

    \noindent$(\Rightarrow)$ When $\mathcal{T}$ is free of folding, the orientation of $\triangle\bm{p}\bm{p}^{-x}\bm{p}^{-y}$
    is preserved by $\mathcal{T}$.
    Because of linear interpolation, $\triangle \bm{p}\bm{p}^{-x}\bm{p}^{-y}$ is transformed to $\triangle \bm{p}^{\phantom{y}}_{t}\bm{p}^{-x}_{t}\bm{p}^{-y}_{t}$, which is also positively oriented.
    Equation~\ref{e:signed_area_triangle} shows that  $\mathcal{D}^{-x}\mathcal{D}^{-y}\jacdet(\bm{p})$ equals twice the signed area of $\triangle \bm{p}^{\phantom{y}}_{t}\bm{p}^{-x}_{t}\bm{p}^{-y}_{t}$.
    Therefore, $\mathcal{D}^{-x}\mathcal{D}^{-y}\jacdet(\bm{p})>0$.

    \vspace*{1em}
    \noindent$(\Leftarrow)$ When $\mathcal{D}^{-x}\mathcal{D}^{-y}\jacdet(\bm{p})>0$, from Eq.~\ref{e:signed_area_triangle} $\triangle \bm{p}^{\phantom{y}}_{t}\bm{p}^{-x}_{t}\bm{p}^{-y}_{t}$ is positively oriented.
    Because of linear interpolation, $\triangle \bm{p}\bm{p}^{-x}\bm{p}^{-y}$ is transformed to
    $\triangle \bm{p}^{\phantom{y}}_{t}\bm{p}^{-x}_{t}\bm{p}^{-y}_{t}$ and both of them are positively oriented. Therefore, $\mathcal{T}$ is free of folding for $\triangle \bm{p}\bm{p}^{-x}\bm{p}^{-y}$.\QED
\end{proof}
\begin{definition}
    A 2D transformation $\mathcal{T}$ is \textit{digitally diffeomorphic} for the region $\triangle \bm{p}\bm{p}^{-x}\bm{p}^{-y}$ if $\mathcal{T}$ is invertible and free of folding for $\triangle \bm{p}\bm{p}^{-x}\bm{p}^{-y}$.
\end{definition}
\begin{proposition}
    A 2D transformation $\mathcal{T}$ is digitally diffeomorphic for the region $\triangle \bm{p}\bm{p}^{-x}\bm{p}^{-y}$ if and only if  $\mathcal{D}^{-x}\mathcal{D}^{-y}\left\vert J\right\vert(\bm{p})>0$.
    \label{p:digital_diffeomorphism_2d}
\end{proposition}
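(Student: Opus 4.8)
The plan is to obtain this statement as an immediate corollary of the two preceding propositions together with the definition of digital diffeomorphism. The first step is to unpack that definition: $\mathcal{T}$ is digitally diffeomorphic for $\triangle \bm{p}\bm{p}^{-x}\bm{p}^{-y}$ precisely when it is both invertible for that triangle and free of folding there. This reduces the biconditional to be proved into a conjunction of two conditions, each of which has already been characterized in terms of the single quantity $\mathcal{D}^{-x}\mathcal{D}^{-y}\jacdet(\bm{p})$.

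Next I would invoke Proposition~\ref{p:invertible_2d} to replace invertibility with the condition $\mathcal{D}^{-x}\mathcal{D}^{-y}\jacdet(\bm{p}) \neq 0$, and Proposition~\ref{p:free_of_folding_2d} to replace freedom from folding with $\mathcal{D}^{-x}\mathcal{D}^{-y}\jacdet(\bm{p}) > 0$. The digital diffeomorphism condition is then equivalent to the logical conjunction $\bigl(\mathcal{D}^{-x}\mathcal{D}^{-y}\jacdet(\bm{p}) \neq 0\bigr) \wedge \bigl(\mathcal{D}^{-x}\mathcal{D}^{-y}\jacdet(\bm{p}) > 0\bigr)$, and nothing further about the geometry of $\mathcal{T}$ needs to be examined.

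The final step is the only piece of reasoning beyond citation: observe that positivity subsumes nonvanishing, so the conjunction collapses to the single inequality $\mathcal{D}^{-x}\mathcal{D}^{-y}\jacdet(\bm{p}) > 0$. Concretely, $\mathcal{D}^{-x}\mathcal{D}^{-y}\jacdet(\bm{p}) > 0$ trivially implies $\mathcal{D}^{-x}\mathcal{D}^{-y}\jacdet(\bm{p}) \neq 0$, so the freedom-from-folding condition already forces invertibility. I expect there to be no genuine obstacle here; the ``hard part'' is merely the bookkeeping check that the invertibility requirement contributes nothing new once positivity is imposed, which makes the proposition a clean logical consequence of the two earlier results.
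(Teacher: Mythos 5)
Your argument is correct and coincides with the paper's own proof, which simply states that the proposition is a direct consequence of Propositions~\ref{p:invertible_2d} and~\ref{p:free_of_folding_2d}; you have merely spelled out the bookkeeping that positivity subsumes nonvanishing. Nothing further is needed.
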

\begin{proof}
    Proposition~\ref{p:digital_diffeomorphism_2d} is a direct consequence of Propositions~\ref{p:invertible_2d} and~\ref{p:free_of_folding_2d}.\QED
\end{proof}
In conclusion, $\mathcal{D}^{-x}\mathcal{D}^{-y}\left\vert J\right\vert$ for the point $\bm{p}$ informs us about the digitally diffeomorphic property of a triangle adjacent to $\bm{p}$, under the assumption that the transformation is linearly interpolated.
\begin{figure*}[!t]
    \centering
    \includegraphics[scale=1]{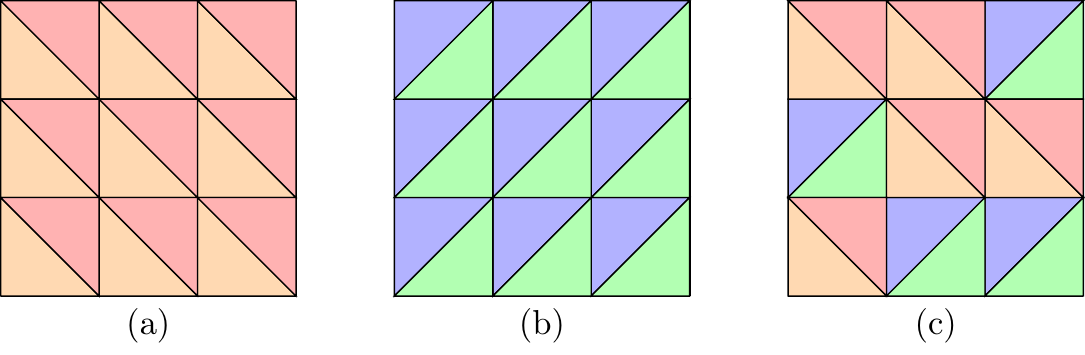}
    \caption{Illustration of the combinations of Jacobian determinants and their corresponding triangular regions. \textbf{(a)}~$\mathcal{D}^{-x}\mathcal{D}^{-y}\jacdet(\bm{p})$ and $\mathcal{D}^{+x}\mathcal{D}^{+y}\jacdet(\bm{p})$. \textbf{(b)}~$\mathcal{D}^{-x}\mathcal{D}^{+y}\jacdet(\bm{p})$ and $\mathcal{D}^{+x}\mathcal{D}^{-y}\jacdet(\bm{p})$. \textbf{(c)}~Each pixel uses a different combination of Jacobian determinants\label{f:meshes}}
\end{figure*}
\subsection{Digital Diffeomorphism in Two Dimensions}
Similar to $\mathcal{D}^{-x}\mathcal{D}^{-y}\jacdet(\bm{p})$, we can replicate Proposition~\ref{p:digital_diffeomorphism_2d} to establish that any $\jacdet(\bm{p})$ approximated using any combination of forward and backward differences is testing if the transformation is digitally diffeomorphic for a triangular region around $\bm{p}$~(see Fig.~\ref{f:notation_2d}(c)), assuming that the region is linearly interpolated.
Since these $\jacdet(\bm{p})$'s cover different regions, their signs are independent of each other.
For example, $\mathcal{T}$ can have a positive $\mathcal{D}^{-x}\mathcal{D}^{-y}\jacdet(\bm{p})$ and a negative $\mathcal{D}^{+x}\mathcal{D}^{+y}\jacdet(\bm{p})$ at the
same time~(see Fig.~\ref{f:fails}(b)).
For each of these approximations, the implied triangular regions for all $\bm{p}$'s taken together cover only half of the space~(\eg $\mathcal{D}^{-x}\mathcal{D}^{-y}\jacdet$ only considers the red tiles in Fig.~\ref{f:meshes}(a)).
Consequently, even if a transformation $\mathcal{T}$ has
$\mathcal{D}^{-x}\mathcal{D}^{-y}\jacdet(\bm{p})>0$ for all $\bm{p}$'s, half the space is not considered and can potentially exhibit folding or be non-invertible.
This is also the case for any other forward and backward difference computations of $\jacdet$.

Although combining the triangular regions of $\mathcal{D}^{-x}\mathcal{D}^{-y}\jacdet$ and $\mathcal{D}^{+x}\mathcal{D}^{+y}\jacdet$ can cover the entire space, positive $\mathcal{D}^{-x}\mathcal{D}^{-y}\jacdet$ and $\mathcal{D}^{+x}\mathcal{D}^{+y}\jacdet$ only guarantee that the transformation is digitally diffeomorphic when the transformation is piecewise linearly interpolated as shown in Fig.~\ref{f:meshes}(a). A different choice, for example that shown in Fig.~\ref{f:meshes}(b), which corresponds to $\mathcal{D}^{-x}\mathcal{D}^{+y}\jacdet$ and $\mathcal{D}^{+x}\mathcal{D}^{-y}\jacdet$ can give contradictory conclusions.
Since there are two ways of dividing the square-size area between grid points, there are many more piecewise linear transformations that correspond to the same digital transformation, \eg Fig.~\ref{f:meshes}(c). To anticipate all possible choices of triangulation---each of which leads to its own invertible transformation on the plane---we should therefore consider all finite difference approximations in determining whether a transformation is digitally diffeomorphic. This leads naturally to the following definition: 
\begin{definition}
    A 2D digital transformation $\mathcal{T}$ is a \textit{digital diffeomorphism} if for every grid point $\bm{p}$ it satisfies
    $\mathcal{D}^{-x}\mathcal{D}^{-y}\jacdet(\bm{p})>0$, $\mathcal{D}^{-x}\mathcal{D}^{+y}\jacdet(\bm{p})>0$, $\mathcal{D}^{+x}\mathcal{D}^{-y}\jacdet(\bm{p})>0$, and $\mathcal{D}^{+x}\mathcal{D}^{+y}\jacdet(\bm{p})>0$.
    \label{def:dd2d}
\end{definition}
The proposed digital diffeomorphism definition guarantees the transformation to be free of folding and invertible regardless of the piecewise linear transformation (on triangles that divide the squares between pixel centers) that is used.
\begin{figure*}[!t]
    \centering
    \includegraphics[scale=1]{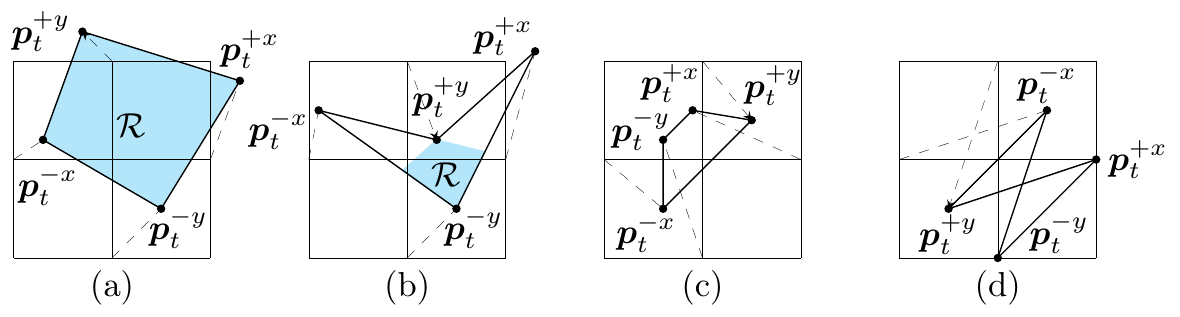}
    \caption{Examples of $\mathcal{R}(\bm{p})$ (blue shaded region) for $\bm{p}^{\phantom{y}}_{t}$ such that
    $\mathcal{T}$ is digitally diffeomorphic. \textbf{(c)}~and \textbf{(d)}~show examples of
    $\mathcal{T}$'s that are causing folds in space and thus $\mathcal{R}(\bm{p})$ is an
    empty set\label{f:polygon}}
\end{figure*}
\subsection{Central Difference Based Jacobian Determinant}

In this section, we analyze the central difference approximation of $\jacdet$ given the previous analysis.  We first ask where $\bm{p}$ can be positioned to yield a digital diffeomorphism when its neighbors have fixed transformations.  We start with the following definitions: 
\begin{definition}
    For grid point $\bm{p}$ with fixed $\bm{p}_t^{-x}$, $\bm{p}_t^{-y}$, $\bm{p}_t^{+x}$, and $\bm{p}_t^{+y}$, $\mathcal{R}(\bm{p})$ is defined to be the region in $\mathbb{R}^{2}$ such that 
    $\mathcal{D}^{-x}\mathcal{D}^{-y}\jacdet(\bm{p})>0$, $\mathcal{D}^{-x}\mathcal{D}^{+y}\jacdet(\bm{p})>0$, $\mathcal{D}^{+x}\mathcal{D}^{-y}\jacdet(\bm{p})>0$, and $\mathcal{D}^{+x}\mathcal{D}^{+y}\jacdet(\bm{p})>0$.  \label{def:r}
\end{definition}

\begin{definition}
    Let $\bm{a}$ and $\bm{b}$ be points in $\mathbb{R}^{2}$. The half-plane $\mathcal{H}(\bm{a},\bm{b})$ is defined as $\bm{p} \in \mathbb{R}^2$ such that $\triangle \bm{a}\bm{b}\bm{p}$ is positively oriented.
\end{definition}
\begin{proposition}
    $\mathcal{R}(\bm{p})$ is the intersection of the four half-planes $\mathcal{H}(\bm{p}^{-x}_{t},\bm{p}^{-y}_{t})$, $\mathcal{H}(\bm{p}^{-y}_{t},\bm{p}^{+x}_{t})$, $\mathcal{H}(\bm{p}^{+x}_{t}\bm{p}^{+y}_{t})$, and $\mathcal{H}(\bm{p}^{+y}_{t},\bm{p}^{-x}_{t})$.
    \label{p:region_r}
\end{proposition}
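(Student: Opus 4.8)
The plan is to reduce each of the four sign conditions in Definition~\ref{def:r} to the statement that the free point $\bm{p}_t$ lies in one particular half-plane, and then observe that demanding all four at once is precisely the intersection. The only machinery needed is the signed-area reading of a finite-difference Jacobian determinant already established for the backward--backward case in Eq.~\ref{e:signed_area_triangle}; I will reuse it for each of the four forward/backward combinations.

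First I would fix the four transformed neighbors $\bm{p}^{-x}_{t},\bm{p}^{-y}_{t},\bm{p}^{+x}_{t},\bm{p}^{+y}_{t}$ and treat every $\jacdet(\bm{p})$ as a function of the free point $\bm{p}_t$. Expanding the difference operators gives $\mathcal{D}^{-x}\mathcal{T}(\bm{p})=\bm{p}_t-\bm{p}^{-x}_{t}$ and $\mathcal{D}^{+x}\mathcal{T}(\bm{p})=\bm{p}^{+x}_{t}-\bm{p}_t$, and likewise in $y$. Placing the relevant pair of these as the columns of the $2\times2$ determinant and invoking the cross-product interpretation of Eq.~\ref{e:signed_area_triangle}, each $\jacdet(\bm{p})$ equals twice a signed area. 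For the mixed cases a single elementary column operation puts the determinant in clean form: adding the $\bm{p}_t$-bearing column to the other one rewrites, for instance, $\mathcal{D}^{-x}\mathcal{D}^{+y}\jacdet(\bm{p})=\det[\bm{p}_t-\bm{p}^{-x}_{t},\,\bm{p}^{+y}_{t}-\bm{p}_t]$ as $\det[\bm{p}_t-\bm{p}^{-x}_{t},\,\bm{p}^{+y}_{t}-\bm{p}^{-x}_{t}]$, i.e.\ twice the signed area of $\triangle\bm{p}^{-x}_{t}\bm{p}_t\bm{p}^{+y}_{t}$; the three remaining combinations are handled identically, each producing the triangle spanned by $\bm{p}_t$ and the two neighbors whose axes are differenced.

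Next I would read off each positivity condition as half-plane membership. A signed area is positive exactly when its triangle is positively oriented, and orientation is invariant under cyclic permutation of the vertices; cycling each triangle so that $\bm{p}_t$ occupies the last vertex turns ``positively oriented'' directly into ``$\bm{p}_t\in\mathcal{H}(\cdot,\cdot)$'' by the definition of $\mathcal{H}$. Carrying this out gives $\mathcal{D}^{-x}\mathcal{D}^{-y}\jacdet(\bm{p})>0\Leftrightarrow\bm{p}_t\in\mathcal{H}(\bm{p}^{-x}_{t},\bm{p}^{-y}_{t})$, $\mathcal{D}^{+x}\mathcal{D}^{-y}\jacdet(\bm{p})>0\Leftrightarrow\bm{p}_t\in\mathcal{H}(\bm{p}^{-y}_{t},\bm{p}^{+x}_{t})$, $\mathcal{D}^{+x}\mathcal{D}^{+y}\jacdet(\bm{p})>0\Leftrightarrow\bm{p}_t\in\mathcal{H}(\bm{p}^{+x}_{t},\bm{p}^{+y}_{t})$, and $\mathcal{D}^{-x}\mathcal{D}^{+y}\jacdet(\bm{p})>0\Leftrightarrow\bm{p}_t\in\mathcal{H}(\bm{p}^{+y}_{t},\bm{p}^{-x}_{t})$. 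Since Definition~\ref{def:r} imposes all four inequalities simultaneously, $\mathcal{R}(\bm{p})$ is the set of $\bm{p}_t$ lying in all four half-planes, i.e.\ their intersection, which is the claim.

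I expect the one place that needs genuine care is the orientation bookkeeping, not any estimation. I must track the sign convention distinguishing forward from backward differences so that each determinant equals a \emph{positively} signed area rather than its negative, and I must pick the cyclic rotation of each triangle so the neighbor pair appears in the same order as in the named half-plane. The reassuring structural fact is that the four pairs $(-x,-y),(-y,+x),(+x,+y),(+y,-x)$ list the neighbors in counterclockwise cyclic order around $\bm{p}_t$, which is exactly why the four half-planes assemble into a single convex wedge; once the signs are pinned down the four equivalences are exact and the intersection statement is immediate.
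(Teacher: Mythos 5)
Your proposal is correct and follows essentially the same route as the paper: each forward/backward combination of $\mathcal{D}\jacdet(\bm{p})$ is rewritten as twice the signed area of the triangle spanned by $\bm{p}^{\phantom{y}}_{t}$ and the two relevant transformed neighbors, positivity is read off as membership in the corresponding half-plane $\mathcal{H}(\cdot,\cdot)$, and the four simultaneous conditions of Definition~\ref{def:r} give the intersection. Your sign and cyclic-ordering bookkeeping matches the paper's pairings exactly, so no further changes are needed.
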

\begin{proof}
    Equation~\ref{e:backward_diff} can be written as:
    $$
        \mathcal{D}^{-x}\mathcal{D}^{-y}\jacdet(\bm{p})
        =
        (\directline{\bm{p}^{-x}_{t}\bm{p}^{-y}_{t}}\times\directline{\bm{p}^{-x}_{t}\bm{p}^{\phantom{y}}_{t}})\cdot \bm{n}.
    $$
    Therefore, when $\mathcal{D}^{-x}\mathcal{D}^{-y}\jacdet(\bm{p})>0$, $\triangle\bm{p}^{-x}_{t}\bm{p}^{-y}_{t}\bm{p}^{\phantom{y}}_{t}$ is positively oriented. Thus $\bm{p}^{\phantom{y}}_{t}$ is in the half-plane $\mathcal{H}(\bm{p}^{-x}_{t},\bm{p}^{-y}_{t})$.
    Analogous statements can be made for the other forward and backward difference based $\jacdet$'s.
    When all the four $\jacdet$'s are positive, $\bm{p}^{\phantom{y}}_{t}$ must be inside the four half-planes $\mathcal{H}(\bm{p}^{-x}_{t},\bm{p}^{-y}_{t})$, $\mathcal{H}(\bm{p}^{-y}_{t},\bm{p}^{+x}_{t})$, $\mathcal{H}(\bm{p}^{+x}_{t}\bm{p}^{+y}_{t})$, and $\mathcal{H}(\bm{p}^{+y}_{t},\bm{p}^{-x}_{t})$. Therefore, $\mathcal{R}(\bm{p})$ is the intersection of these four half-planes.\QED
\end{proof}
Figure~\ref{f:polygon} shows examples of $\mathcal{R}(\bm{p})$.
In Figs.~\ref{f:polygon}(c) and~(d), no matter where $\bm{p}^{\phantom{y}}_{t}$ is located, at least one forward or backward difference based $\jacdet(\bm{p})$ is guaranteed to be negative and, thus, $\mathcal{R}(\bm{p})$ is empty. This makes sense since the transformations
in Figs.~\ref{f:polygon}(c) and~(d) cause folds in space.
\begin{proposition}
    Assume $\bm{p}^{-x}_{t}$, $\bm{p}^{-y}_{t}$, $\bm{p}^{+x}_{t}$, and $\bm{p}^{+x}_{t}$ forms a simple polygon~(without self-intersection).
    Then $\mathcal{R}(\bm{p})$ is non-empty if and only if $\mathcal{D}^{0x}\mathcal{D}^{0y}\left\vert J\right\vert(\bm{p})>0$.
    \label{p:central_difference_2d}
\end{proposition}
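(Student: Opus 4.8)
The plan is to collapse the central difference into a single geometric quantity---the signed area of the polygon formed by the transformed neighbors---and then to match that quantity against the characterization of $\mathcal{R}(\bm{p})$ from Proposition~\ref{p:region_r}. Write $\bm{a}=\bm{p}^{-x}_{t}$, $\bm{b}=\bm{p}^{-y}_{t}$, $\bm{c}=\bm{p}^{+x}_{t}$, $\bm{d}=\bm{p}^{+y}_{t}$ for the fixed neighbors, taken in the cyclic order of Proposition~\ref{p:region_r}. The central differences are $\mathcal{D}^{0x}\mathcal{T}(\bm{p})=\tfrac12(\bm{c}-\bm{a})$ and $\mathcal{D}^{0y}\mathcal{T}(\bm{p})=\tfrac12(\bm{d}-\bm{b})$, so stacking them as the columns of the Jacobian gives
\begin{equation*}
\mathcal{D}^{0x}\mathcal{D}^{0y}\jacdet(\bm{p})=\tfrac14\big((\bm{c}-\bm{a})\times(\bm{d}-\bm{b})\big)\cdot\bm{n},
\end{equation*}
that is, a positive multiple of the cross product of the two diagonals $\directline{\bm{a}\bm{c}}$ and $\directline{\bm{b}\bm{d}}$ of the quadrilateral $\bm{a}\bm{b}\bm{c}\bm{d}$. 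A short expansion of this cross product shows that it equals exactly twice the signed area of $\bm{a}\bm{b}\bm{c}\bm{d}$; hence $\mathcal{D}^{0x}\mathcal{D}^{0y}\jacdet(\bm{p})$ shares the sign of that signed area, and $\mathcal{D}^{0x}\mathcal{D}^{0y}\jacdet(\bm{p})>0$ is equivalent to $\bm{a}\bm{b}\bm{c}\bm{d}$ being positively oriented.

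For the direction $\mathcal{R}(\bm{p})\neq\emptyset\Rightarrow\mathcal{D}^{0x}\mathcal{D}^{0y}\jacdet(\bm{p})>0$ I would invoke the fan decomposition of signed area. By Proposition~\ref{p:region_r}, any $\bm{p}_{t}\in\mathcal{R}(\bm{p})$ makes each of the four triangles $\triangle\bm{a}\bm{b}\bm{p}_{t}$, $\triangle\bm{b}\bm{c}\bm{p}_{t}$, $\triangle\bm{c}\bm{d}\bm{p}_{t}$, $\triangle\bm{d}\bm{a}\bm{p}_{t}$ positively oriented. Their signed areas sum to the signed area of $\bm{a}\bm{b}\bm{c}\bm{d}$ (an algebraic identity valid for \emph{any} apex $\bm{p}_{t}$), so the quadrilateral's signed area is a sum of positive terms and is therefore positive; by the first step this is precisely $\mathcal{D}^{0x}\mathcal{D}^{0y}\jacdet(\bm{p})>0$. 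This implication uses no hypothesis on the polygon at all.

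The converse is where the simple-polygon hypothesis is essential, because it is exactly the statement that a positively oriented \emph{simple} quadrilateral is star-shaped, i.e.\ the intersection of the four half-planes of Proposition~\ref{p:region_r} is non-empty. I would prove this by exhibiting a witness point. Every simple quadrilateral has at least one interior diagonal, and after a cyclic relabeling---which preserves both the orientation and the cyclic family of four triangles---I may assume it is $\directline{\bm{a}\bm{c}}$. Since $\bm{a}\bm{b}\bm{c}\bm{d}$ is positively oriented and $\directline{\bm{a}\bm{c}}$ is interior, this diagonal splits it into the positively oriented triangles $\triangle\bm{a}\bm{b}\bm{c}$ and $\triangle\bm{a}\bm{c}\bm{d}$. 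I then take $\bm{p}_{t}=\bm{c}+t(\bm{a}-\bm{c})$ for small $t>0$, a point just inside the quadrilateral along that diagonal. The two triangles incident to $\bm{c}$, namely $\triangle\bm{b}\bm{c}\bm{p}_{t}$ and $\triangle\bm{c}\bm{d}\bm{p}_{t}$, degenerate at $t=0$ but have signed areas linear in $t$ with positive leading coefficient (equal to the signed areas of $\triangle\bm{a}\bm{b}\bm{c}$ and $\triangle\bm{a}\bm{c}\bm{d}$), while the remaining two triangles are strictly positive at $t=0$ and stay positive for small $t$ by continuity. Hence all four signed areas are positive for small $t>0$, so $\bm{p}_{t}\in\mathcal{R}(\bm{p})$ and $\mathcal{R}(\bm{p})\neq\emptyset$.

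I expect this converse to be the main obstacle. The forward implication and the diagonal-versus-signed-area identity are essentially algebraic, but the converse rests on the genuinely geometric fact that a positively oriented simple quadrilateral is star-shaped. The simple-polygon assumption enters here and nowhere else: it guarantees an interior diagonal and excludes self-intersecting configurations that can have positive signed area yet empty kernel. The witness construction above is the cleanest route I see to make star-shapedness explicit, since the same point $\bm{c}+t(\bm{a}-\bm{c})$ handles the convex case and the single-reflex-vertex case uniformly.
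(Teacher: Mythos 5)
Your proof is correct, but its second half takes a genuinely different route from the paper's. The forward implication is essentially the same argument: your fan decomposition of the signed area of $\bm{a}\bm{b}\bm{c}\bm{d}$ into the four triangles with apex $\bm{p}^{\phantom{y}}_{t}$ is exactly the multilinearity identity the paper records as Eq.~\ref{e:area_sum}, since each corner Jacobian is twice the signed area of one fan triangle; your diagonal formula $\mathcal{D}^{0x}\mathcal{D}^{0y}\jacdet(\bm{p})=\tfrac14\bigl((\bm{c}-\bm{a})\times(\bm{d}-\bm{b})\bigr)\cdot\bm{n}$ is a clean equivalent of it (and in fact carries the correct constant $\tfrac14$; the $\tfrac12$ in the paper's displayed identity appears to be a typo, though this does not affect any sign argument). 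The real divergence is in the converse. The paper deduces positive orientation of the quadrilateral from the surveyor's formula and then cites Chv\'atal's art-gallery theorem---one guard suffices for a simple polygon with at most five vertices---to obtain a point seeing the whole polygon and hence lying in all four half-planes of Proposition~\ref{p:region_r}. You instead prove the needed star-shapedness from scratch: choose an interior diagonal $\directline{\bm{a}\bm{c}}$, observe that it splits the positively oriented simple quadrilateral into two positively oriented triangles, and place the witness at $\bm{c}+t(\bm{a}-\bm{c})$ for small $t>0$; the two fan triangles that degenerate at $t=0$ have signed areas linear in $t$ with coefficients equal to those two positive triangle areas, while the other two remain positive by continuity. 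This is self-contained, constructive, and isolates exactly where simplicity is used (existence of an interior diagonal), whereas the paper's version is shorter at the cost of importing a general visibility theorem. Both arguments are valid.
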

\begin{proof}
    $\mathcal{D}^{0x}\mathcal{D}^{0y}\jacdet(\bm{p})$ can be written as:
    \begin{align}
        \mathcal{D}^{0x}\mathcal{D}^{0y}\jacdet
        =\frac{1}{2}
        &
        \left(\mathcal{D}^{-x}\mathcal{D}^{-y}\jacdet+\mathcal{D}^{+x}\mathcal{D}^{-y}\jacdet \right.\nonumber
        \\
        &
        \left.+\mathcal{D}^{-x}\mathcal{D}^{+y}\jacdet+\mathcal{D}^{+x}\mathcal{D}^{+y}\jacdet\right).
        \label{e:area_sum}
    \end{align}

    \noindent$(\Rightarrow)$ If $\mathcal{R}(\bm{p})$ is non-empty, for every $\bm{p}^{\phantom{y}}_{t}\in\mathcal{R}(\bm{p})$ all its forward and backward difference based $\jacdet(\bm{p})$'s are positive by Definition~\ref{def:r}, and therefore from Eq.~\ref{e:area_sum}, $\mathcal{D}^{0x}\mathcal{D}^{0y}\jacdet(\bm{p})>0$.

    \vspace*{1em}
    \noindent$(\Leftarrow)$ On the other hand, if $\mathcal{D}^{0x}\mathcal{D}^{0y}\jacdet>0$, the polygon $\bm{p}^{-x}_{t}\bm{p}^{-y}_{t}\bm{p}^{+x}_{t}\bm{p}^{+y}_{t}$ is positively oriented~\citep{braden1986surveyor}~(\ie interior to the left).
    Thus, if there exists a $\bm{p}^{\phantom{y}}_{t}$ that is visible to all vertices of the polygon, then $\bm{p}^{\phantom{y}}_{t}\in\mathcal{R}(\bm{p})$ by Proposition~\ref{p:region_r}.
    Following~\citep{chvatal1975combinatorial}, such $\bm{p}^{\phantom{y}}_{t}$ always exists for simple polygons with five or fewer vertices.
    Therefore, $\mathcal{R}(\bm{p})$ is non-empty.\QED
\end{proof}

Although a positive $\mathcal{D}^{0x}\mathcal{D}^{0y}\jacdet(\bm{p})$ ensures that $\mathcal{R}(\bm{p})$ is non-empty, $\bm{p}$ can still be digitally non-diffeomorphic when $\bm{p}^{\phantom{y}}_{t}\not\in\mathcal{R}(\bm{p})$, which we see in the \emph{checkerboard problem}.
In addition to the \emph{checkerboard problem}, $\mathcal{D}^{0x}\mathcal{D}^{0y}\jacdet$ also fails to provide a meaningful interpretation when the polygon $\bm{p}^{-x}_{t}\bm{p}^{-y}_{t}\bm{p}^{+x}_{t}\bm{p}^{+y}_{t}$ exhibits self-intersection~(see Fig.~\ref{f:polygon}(d)).
In conclusion, $\mathcal{D}^{0x}\mathcal{D}^{0y}\jacdet(\bm{p}) \leq 0$ always indicates that $\mathcal{T}$ is digitally non-diffeomorphic but $\mathcal{D}^{0x}\mathcal{D}^{0y}\jacdet(\bm{p}) > 0$ does not mean it is digitally diffeomorphic because of the checkerboard or self-intersection problems. Therefore, use of central differences alone to estimate $\jacdet$  is an inadequate characterization of digital diffeomorphism.

\subsection{Digital Diffeomorphism in Three Dimensions}
\label{s:3d}
Consider the standard Euclidean space $\mathbb{R}^{3}$ that follows the right-hand rule.
Let $\mathcal{T}$ be a digital transformation of $\mathbb{R}^{3}$ that is defined for every grid point $\bm{p}$.
We denote the 6-connected neighbors of $\bm{p}$ as $\bm{p}^{\pm x}, \bm{p}^{\pm y}, \bm{p}^{\pm z}$ (see Fig.~\ref{f:demonstration_3d}(a)) and their transformed locations are denoted with the subscript $t$.
We denote the tetrahedron defined by the vectors $\directline{\bm{p}\bm{p}^{-x}}$, $\directline{\bm{p}\bm{p}^{-y}}$, and $\directline{\bm{p}\bm{p}^{-z}}$ as $\tetrahedron \bm{p}\bm{p}^{-x}\bm{p}^{-y}\bm{p}^{-z}$, and we
assume that the 3D transformation $\mathcal{T}$ is linearly interpolated on $\tetrahedron \bm{p}\bm{p}^{-x}\bm{p}^{-y}\bm{p}^{-z}$.
\begin{proposition}
    A 3D transformation $\mathcal{T}$ is invertible for $\tetrahedron \bm{p}\bm{p}^{-x}\bm{p}^{-y}\bm{p}^{-z}$ if and only if $\mathcal{D}^{-x} \mathcal{D}^{-y} \mathcal{D}^{-z}\jacdet$ for $\bm{p}$ is nonzero.
    \label{p:invertible_3d}
\end{proposition}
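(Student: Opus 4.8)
The plan is to lift the proof of Proposition~\ref{p:invertible_2d} by one dimension, replacing the triangle by the tetrahedron and the $2\times 2$ determinant by a $3\times 3$ determinant. First I would observe that, since $\mathcal{T}$ is linearly interpolated on $\tetrahedron \bm{p}\bm{p}^{-x}\bm{p}^{-y}\bm{p}^{-z}$, its restriction to this tetrahedron is an affine map; and an affine map is injective on a full-dimensional simplex exactly when its linear part is nonsingular. Reading the linear part off the $3\times 3$ backward-difference Jacobian matrix (the 3D analogue of Eq.~\ref{e:backward_diff}), its columns are the backward-difference vectors $\mathcal{D}^{-x}\mathcal{T}(\bm{p})$, $\mathcal{D}^{-y}\mathcal{T}(\bm{p})$, and $\mathcal{D}^{-z}\mathcal{T}(\bm{p})$, each of which is, up to a sign coming from the difference convention, one of the transformed edge vectors $\directline{\bm{p}^{\phantom{y}}_{t}\bm{p}^{-x}_{t}}$, $\directline{\bm{p}^{\phantom{y}}_{t}\bm{p}^{-y}_{t}}$, $\directline{\bm{p}^{\phantom{y}}_{t}\bm{p}^{-z}_{t}}$. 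Hence $\mathcal{T}$ is invertible for the tetrahedron if and only if these three edge vectors are linearly independent, i.e.\ the four image vertices are not coplanar.

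Next I would identify the finite-difference Jacobian with the corresponding scalar triple product, the 3D analogue of Eq.~\ref{e:signed_area_triangle}. Expanding the $3\times 3$ determinant $\mathcal{D}^{-x}\mathcal{D}^{-y}\mathcal{D}^{-z}\jacdet(\bm{p})$ by columns gives
\[
\mathcal{D}^{-x}\mathcal{D}^{-y}\mathcal{D}^{-z}\jacdet(\bm{p}) = \pm\,(\directline{\bm{p}^{\phantom{y}}_{t}\bm{p}^{-x}_{t}}\times\directline{\bm{p}^{\phantom{y}}_{t}\bm{p}^{-y}_{t}})\cdot\directline{\bm{p}^{\phantom{y}}_{t}\bm{p}^{-z}_{t}},
\]
which is, up to sign, six times the signed volume of $\tetrahedron \bm{p}^{\phantom{y}}_{t}\bm{p}^{-x}_{t}\bm{p}^{-y}_{t}\bm{p}^{-z}_{t}$ and vanishes precisely when the three edge vectors are coplanar. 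Chaining the equivalences then yields the claim: $\mathcal{T}$ is invertible for the tetrahedron $\iff$ the edge vectors are linearly independent $\iff$ the triple product is nonzero $\iff \mathcal{D}^{-x}\mathcal{D}^{-y}\mathcal{D}^{-z}\jacdet(\bm{p})\neq 0$.

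I do not expect a deep obstacle, since the statement is a clean dimensional lift of Proposition~\ref{p:invertible_2d}; the only delicate point is the column-to-edge bookkeeping that produces the $\pm$ above. With the backward-difference convention each of the three columns carries a factor of $-1$, and unlike the 2D case these signs no longer cancel (three factors of $-1$ rather than two), so the determinant equals the \emph{negative} of the triple product of the edge vectors. Because the present proposition only distinguishes zero from nonzero, this overall sign is immaterial here, but I would flag it explicitly, as it will matter for the orientation-preserving/folding analogue that should accompany this invertibility statement. I would also take care to read ``invertible for the tetrahedron'' as injectivity of the affine restriction, which, as noted in the first step, reduces to nonsingularity of the linear part.
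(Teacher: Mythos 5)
Your proposal is correct and follows essentially the same route as the paper: write the interpolated map on the tetrahedron as a $3\times 3$ matrix whose columns are (up to sign) the transformed edge vectors, identify $\mathcal{D}^{-x}\mathcal{D}^{-y}\mathcal{D}^{-z}\jacdet(\bm{p})$ with the scalar triple product of those edges, and equate invertibility with linear independence, hence with a nonvanishing determinant. Your sign bookkeeping (an overall factor of $-1$ from the three backward-difference columns) matches the paper's Eq.~\ref{e:backward_diff_3d} exactly, and your remark that it is immaterial here but relevant for the folding proposition is precisely how the paper uses it.
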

\begin{proof}
    Since $\mathcal{T}$ is linearly interpolated on $\tetrahedron \bm{p}\bm{p}^{-x}\bm{p}^{-y}\bm{p}^{-z}$, $\mathcal{T}$ is linear for 
    $\tetrahedron \bm{p}\bm{p}^{-x}\bm{p}^{-y}\bm{p}^{-z}$ and can be written as a $3\times3$ matrix with $\directline{\bm{p}^{\phantom{y}}_{t}\bm{p}^{-x}_{t}}$, $\directline{\bm{p}^{\phantom{y}}_{t}\bm{p}^{-y}_{t}}$, and $\directline{\bm{p}^{\phantom{y}}_{t}\bm{p}^{-z}_{t}}$ as its columns.
    Thus, $\mathcal{T}$ is invertible if and only if $\directline{\bm{p}^{\phantom{y}}_{t}\bm{p}^{-x}_{t}}$, $\directline{\bm{p}^{\phantom{y}}_{t}\bm{p}^{-y}_{t}}$, and $\directline{\bm{p}^{\phantom{y}}_{t}\bm{p}^{-z}_{t}}$ are linearly independent~(not colinear).
    $\mathcal{D}^{-x} \mathcal{D}^{-y} \mathcal{D}^{-z}\jacdet(\bm{p})$ can be written as a triple product:
    \begin{equation}
        \mathcal{D}^{-x}\mathcal{D}^{-y}\mathcal{D}^{-z}\jacdet(\bm{p}) =
        -(\directline{\bm{p}^{\phantom{y}}_{t}\bm{p}^{-x}_{t}}\times\directline{\bm{p}^{\phantom{y}}_{t}\bm{p}^{-y}_{t}})\cdot\directline{\bm{p}^{\phantom{y}}_{t}\bm{p}^{-z}_{t}}.
        \label{e:backward_diff_3d}
    \end{equation}
    Therefore, $\mathcal{T}$ is invertible if and only if $\mathcal{D}^{-x} \mathcal{D}^{-y} \mathcal{D}^{-z}\jacdet(\bm{p}) \neq 0$.\QED
\end{proof}
\begin{figure*}[!t]
    \centering
    \includegraphics[scale=1]{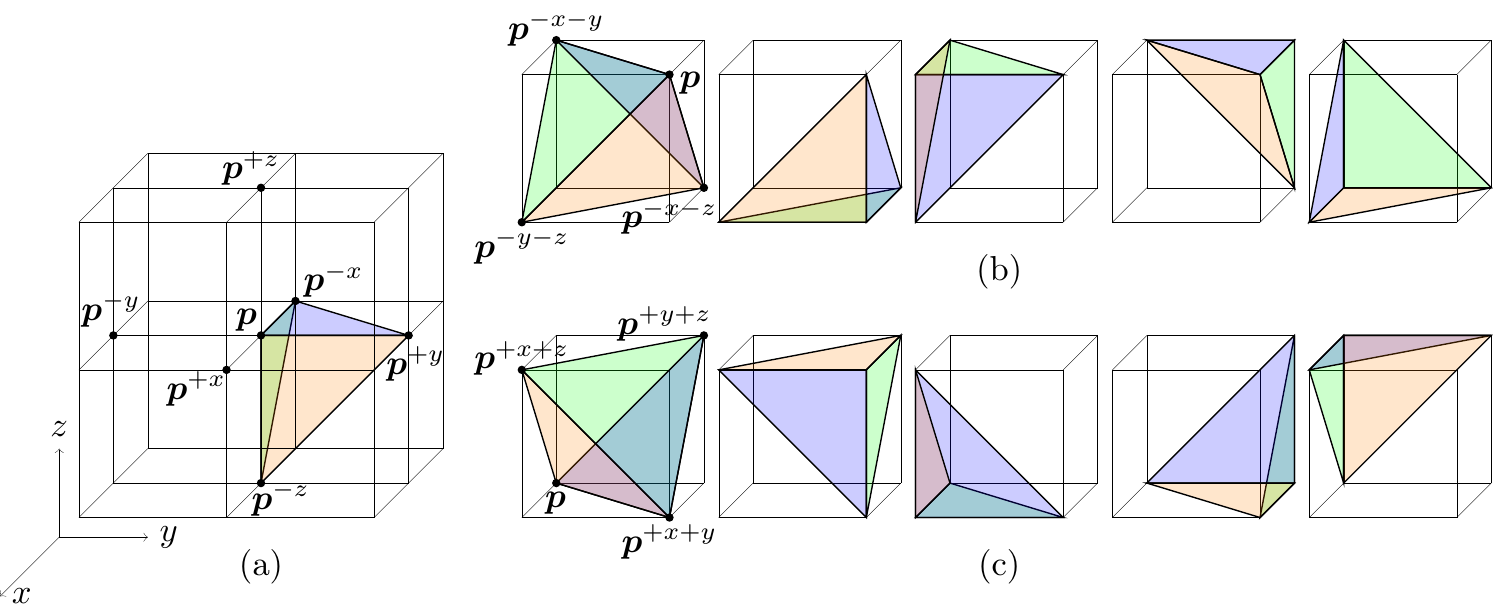}
    \caption{\textbf{(a)}~shows the notations for $\bm{p}$ and its 6-connected neighbors in 3D and the tetrahedron considered by $\mathcal{D}^{-x}\mathcal{D}^{+y}\mathcal{D}^{-z}\jacdet$. 
    \textbf{(b)}~and \textbf{(c)}~are illustrations of the two schemes to divide the cube volume in-between grid points in 3D 
    \label{f:demonstration_3d}}
\end{figure*}
\begin{definition}
    A 3D transformation $\mathcal{T}$ is said to cause \textit{folding} for $\tetrahedron \bm{p}\bm{p}^{-x}\bm{p}^{-y}\bm{p}^{-z}$ if the orientation of $\tetrahedron \bm{p}\bm{p}^{-x}\bm{p}^{-y}\bm{p}^{-z}$ is reversed by $\mathcal{T}$.
\end{definition}
\begin{proposition}
    A 3D transformation $\mathcal{T}$ is free of folding for $\tetrahedron \bm{p}\bm{p}^{-x}\bm{p}^{-y}\bm{p}^{-z}$ if and only if $\mathcal{T}$ has $\mathcal{D}^{-x} \mathcal{D}^{-y} \mathcal{D}^{-z}\jacdet(\bm{p}) > 0$.
    \label{p:free_of_folding_3d}
\end{proposition}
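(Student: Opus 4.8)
The plan is to mirror the two-dimensional argument of Proposition~\ref{p:free_of_folding_2d}, replacing signed areas of triangles by signed volumes of tetrahedra and the planar cross-product by the scalar triple product of Eq.~\ref{e:backward_diff_3d}. The orientation of a tetrahedron under a linear map is governed by the sign of the map's determinant, and since $\mathcal{T}$ is linear on $\tetrahedron \bm{p}\bm{p}^{-x}\bm{p}^{-y}\bm{p}^{-z}$ by linear interpolation, being \emph{free of folding} is exactly the statement that this sign is positive. Everything therefore reduces to reading Eq.~\ref{e:backward_diff_3d} as a statement about signed volume and tracking the orientation convention carefully.

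First I would fix the reference orientation. With $\directline{\bm{p}\bm{p}^{-x}}$, $\directline{\bm{p}\bm{p}^{-y}}$, and $\directline{\bm{p}\bm{p}^{-z}}$ pointing along the negative $x$, $y$, and $z$ axes, a direct evaluation of the triple product shows that the three edge vectors have reversed handedness relative to the right-hand rule, and that the leading minus sign in Eq.~\ref{e:backward_diff_3d} is precisely the bookkeeping needed so that the undeformed tetrahedron $\tetrahedron \bm{p}\bm{p}^{-x}\bm{p}^{-y}\bm{p}^{-z}$ is positively oriented; that is, $\mathcal{D}^{-x}\mathcal{D}^{-y}\mathcal{D}^{-z}\jacdet$ evaluated at the identity transformation is positive. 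This establishes in general that $\mathcal{D}^{-x}\mathcal{D}^{-y}\mathcal{D}^{-z}\jacdet(\bm{p})$ equals six times the signed volume of the image tetrahedron $\tetrahedron \bm{p}^{\phantom{y}}_{t}\bm{p}^{-x}_{t}\bm{p}^{-y}_{t}\bm{p}^{-z}_{t}$, with the sign convention fixed so that a positive value corresponds to positive orientation.

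For the forward direction $(\Rightarrow)$, I would assume $\mathcal{T}$ is free of folding. By definition the orientation of $\tetrahedron \bm{p}\bm{p}^{-x}\bm{p}^{-y}\bm{p}^{-z}$ is preserved; since linear interpolation sends it to $\tetrahedron \bm{p}^{\phantom{y}}_{t}\bm{p}^{-x}_{t}\bm{p}^{-y}_{t}\bm{p}^{-z}_{t}$ and the reference is positively oriented, the image tetrahedron is positively oriented as well, so by Eq.~\ref{e:backward_diff_3d} its signed volume---and hence $\mathcal{D}^{-x}\mathcal{D}^{-y}\mathcal{D}^{-z}\jacdet(\bm{p})$---is positive. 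For the converse $(\Leftarrow)$, I would read Eq.~\ref{e:backward_diff_3d} backwards: $\mathcal{D}^{-x}\mathcal{D}^{-y}\mathcal{D}^{-z}\jacdet(\bm{p})>0$ forces the image tetrahedron to be positively oriented, which matches the reference orientation, so orientation is preserved and $\mathcal{T}$ is free of folding for $\tetrahedron \bm{p}\bm{p}^{-x}\bm{p}^{-y}\bm{p}^{-z}$.

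The main obstacle is the orientation bookkeeping rather than any computation. I must verify that the sign convention built into Eq.~\ref{e:backward_diff_3d} genuinely renders the reference tetrahedron positively oriented---the three edge vectors point along the negative coordinate axes, flipping the handedness, which the explicit minus sign corrects---and that ``reversing orientation'' of a tetrahedron is correctly identified with a sign change of the scalar triple product. Once this identification is pinned down, exactly as the planar signed area played that role in the 2D proof, both implications follow immediately from Eq.~\ref{e:backward_diff_3d}.
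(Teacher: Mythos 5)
Your proposal is correct and follows essentially the same route as the paper: both identify $\mathcal{D}^{-x}\mathcal{D}^{-y}\mathcal{D}^{-z}\jacdet(\bm{p})$ via Eq.~\ref{e:backward_diff_3d} with a fixed nonzero multiple of the signed volume of the image tetrahedron, check the sign at the identity, and read off both implications from orientation preservation under the linear map. The only difference is sign bookkeeping: the paper keeps the standard right-hand-rule convention (under which $\tetrahedron \bm{p}\bm{p}^{-x}\bm{p}^{-y}\bm{p}^{-z}$ is \emph{negatively} oriented and the determinant is six times the \emph{negative} signed volume), whereas you absorb the leading minus sign into the orientation convention so the reference tetrahedron reads as positive---these are the same argument.
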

\begin{proof}
    $(\Rightarrow)$ When $\mathcal{T}$ is free of folding, the orientation of $\tetrahedron \bm{p}\bm{p}^{-x}\bm{p}^{-y}\bm{p}^{-z}$~(negatively oriented by the right-hand rule) is preserved by $\mathcal{T}$.
    Because of linear interpolation, $\tetrahedron \bm{p}\bm{p}^{-x}\bm{p}^{-y}\bm{p}^{-z}$ is transformed to $\tetrahedron \bm{p}^{\phantom{y}}_{t}\bm{p}_{t}^{-x}\bm{p}_{t}^{-y}\bm{p}_{t}^{-z}$, which is also negatively oriented. 
    Equation~\ref{e:backward_diff_3d} shows that $\mathcal{D}^{-x}\mathcal{D}^{-y}\mathcal{D}^{-z}\jacdet(\bm{p})$ equals to six times the negative signed volume of $\tetrahedron \bm{p}^{\phantom{y}}_{t}\bm{p}_{t}^{-x}\bm{p}_{t}^{-y}\bm{p}_{t}^{-z}$.
    Therefore, $\mathcal{D}^{-x}\mathcal{D}^{-y}\mathcal{D}^{-z}\jacdet(\bm{p})>0$.

    \vspace*{1em}
    \noindent$(\Leftarrow)$ $\mathcal{D}^{-x}\mathcal{D}^{-y}\mathcal{D}^{-z}\jacdet(\bm{p})>0$ indicates that $\tetrahedron \bm{p}^{\phantom{y}}_{t}\bm{p}_{t}^{-x}\bm{p}_{t}^{-y}\bm{p}_{t}^{-z}$ is negatively oriented by Eq.~\ref{e:backward_diff_3d}.
    Because of linear interpolation $\tetrahedron \bm{p}\bm{p}^{-x}\bm{p}^{-y}\bm{p}^{-z}$ is mapped to $\tetrahedron \bm{p}^{\phantom{y}}_{t}\bm{p}_{t}^{-x}\bm{p}_{t}^{-y}\bm{p}_{t}^{-z}$ and both of them are negatively oriented.
    Therefore, $\mathcal{T}$ is free of folding for $\tetrahedron \bm{p}\bm{p}^{-x}\bm{p}^{-y}\bm{p}^{-z}$.\QED
\end{proof}
\begin{definition}
    A 3D transformation $\mathcal{T}$ is \textit{digitally diffeomorphic} for
    the region $\tetrahedron \bm{p}\bm{p}^{-x}_{}\bm{p}^{-y}_{}\bm{p}^{-z}_{}$ if $\mathcal{T}$ is invertible and free of folding for $\tetrahedron \bm{p}\bm{p}^{-x}\bm{p}^{-y}\bm{p}^{-z}$.
\end{definition}
\begin{proposition}
    A 3D transformation $\mathcal{T}$ is digitally diffeomorphic for the region $\tetrahedron \bm{p}\bm{p}^{-x}_{}\bm{p}^{-y}_{}\bm{p}^{-z}_{}$ if and only if $\mathcal{D}^{-x}\mathcal{D}^{-y}\mathcal{D}^{-z}\jacdet(\bm{p})>0$.
    \label{p:digital_diffeomorphism_3d}
\end{proposition}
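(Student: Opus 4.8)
The plan is to derive this proposition directly by composing the two immediately preceding results, exactly as was done for the 2D analogue (Proposition~\ref{p:digital_diffeomorphism_2d}). By the definition stated just before this proposition, $\mathcal{T}$ being digitally diffeomorphic for $\tetrahedron \bm{p}\bm{p}^{-x}\bm{p}^{-y}\bm{p}^{-z}$ means precisely that $\mathcal{T}$ is both invertible and free of folding for that tetrahedron. I would therefore invoke Proposition~\ref{p:invertible_3d} to characterize invertibility as $\mathcal{D}^{-x}\mathcal{D}^{-y}\mathcal{D}^{-z}\jacdet(\bm{p})\neq 0$ and Proposition~\ref{p:free_of_folding_3d} to characterize freedom from folding as $\mathcal{D}^{-x}\mathcal{D}^{-y}\mathcal{D}^{-z}\jacdet(\bm{p})>0$, and then intersect the two conditions.

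The one observation worth making explicit is that the positivity condition subsumes the nonzero condition: since $\mathcal{D}^{-x}\mathcal{D}^{-y}\mathcal{D}^{-z}\jacdet(\bm{p})>0$ already forces $\mathcal{D}^{-x}\mathcal{D}^{-y}\mathcal{D}^{-z}\jacdet(\bm{p})\neq 0$, freedom from folding automatically carries invertibility with it. Hence the conjunction of the two required properties collapses to the single inequality $\mathcal{D}^{-x}\mathcal{D}^{-y}\mathcal{D}^{-z}\jacdet(\bm{p})>0$, which establishes both directions of the biconditional simultaneously: the forward direction uses that digital diffeomorphism implies free of folding, hence positivity; the reverse direction uses that positivity yields both free of folding (Proposition~\ref{p:free_of_folding_3d}) and, a fortiori, invertibility (Proposition~\ref{p:invertible_3d}).

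I do not anticipate any genuine obstacle, as this is a purely logical composition of Propositions~\ref{p:invertible_3d} and~\ref{p:free_of_folding_3d} that introduces no new geometry or computation. The only point requiring care is to confirm that the definition of digital diffeomorphism for the region is the logical conjunction of invertibility and freedom from folding, so that the sharper positivity condition is the binding one; once that is noted, the conclusion follows immediately.
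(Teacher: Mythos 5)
Your proposal is correct and follows exactly the paper's own route: the paper also dispatches this proposition as a direct consequence of Propositions~\ref{p:invertible_3d} and~\ref{p:free_of_folding_3d}. Your added remark that positivity subsumes the nonzero condition is a sensible explicit note, but introduces nothing beyond the paper's one-line argument.
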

\begin{proof}
    Proposition~\ref{p:digital_diffeomorphism_3d} is a direct consequence of
    Propositions~\ref{p:invertible_3d} and~\ref{p:free_of_folding_3d}.\QED
\end{proof}
Similarly, we can prove that a $\jacdet(\bm{p})$ approximated using any combination of forward and backward differences is testing if the transformation is digitally diffeomorphic in a tetrahedron adjacent to $\bm{p}$.
One of these tetrahedra~(for $\mathcal{D}^{-x}\mathcal{D}^{+y}\mathcal{D}^{-z}\jacdet$) is visualized in Fig.~\ref{f:demonstration_3d}(a). 
However, unlike the 2D case where all the triangular regions completely cover the entire 2D space, the union of all these adjacent tetrahedra does not fill the entire 3D space.
Therefore, even if all forward and backward difference approximations of $\jacdet$ are positive, the transformation can still cause folding in the spaces not covered by these adjacent tetrahedra. 

To solve this issue, we introduce another tetrahedron $\tetrahedron\bm{p}\bm{p}^{-x-y}_{}\bm{p}^{-x-z}_{}\bm{p}^{-y-z}_{}$, as shown in Fig.~\ref{f:demonstration_3d}(b).
When combined with four existing tetrahedra from finite difference based $\jacdet$'s, they completely cover the
entire volume.
We define $\vert J^\star_1\vert(\bm{p})$ as
\begin{equation}
    \vert J^{\star}_{1}\vert(\bm{p}) = 
    (\directline{\bm{p}^{\phantom{y}}_{t}\bm{p}^{-x-y}_{t}}\times\directline{\bm{p}^{\phantom{y}}_{t}\bm{p}^{-x-z}_{t}})\cdot\directline{\bm{p}^{\phantom{y}}_{t}\bm{p}^{-y-z}_{t}},
    \label{e:extra_tetrahedron_1}
\end{equation}
where $\bm{p}^{-x-y}_{t}$, $\bm{p}^{-x-z}_{t}$, and $\bm{p}^{-y-z}_{t}$ are the transformed locations of $\bm{p}^{-x-y}$, $\bm{p}^{-x-z}$, $\bm{p}^{-y-z}$.
The signed volume of the extra tetrahedron after applying $\mathcal{T}$ equals $\frac{1}{6}\vert J^{\star}_{1}\vert(\bm{p})$.
Similar to 2D, there are two ways of dividing the cube-size volume in-between grid points into five tetrahedra, as shown in Figs.~\ref{f:demonstration_3d}(b) and~(c).
The signed volume of the extra tetrahedron in Fig.~\ref{f:demonstration_3d}(c) can be computed as $\frac{1}{6}\vert J^{\star}_{2}\vert (\bm{p})$, where
\begin{equation}
    \vert J^{\star}_{2}\vert(\bm{p}) = 
    (\directline{\bm{p}^{\phantom{y}}_{t}\bm{p}^{+x+y}_{t}}\times\directline{\bm{p}^{\phantom{y}}_{t}\bm{p}^{+y+z}_{t}})\cdot\directline{\bm{p}^{\phantom{y}}_{t}\bm{p}^{+x+z}_{t}}.
    \label{e:extra_tetrahedron_2}
\end{equation}
There are other ways to divide a cube into tetrahedra~\citep{carr2006artifacts}.
These other schemes are not considered here because 1)~their computation involve finite difference approximations at different points than those we consider, or 2)~their computation requires interpolating the transformation at non-grid point.
\begin{definition}
    A 3D digital transformation $\mathcal{T}$ is a digital diffeomorphism if for every grid point $\bm{p}$ its forward and backward difference based $\jacdet$'s of $\bm{p}$ are all positive and both $\vert J^{\star}_{1}\vert (\bm{p})$ and $\vert J^{\star}_{2}\vert (\bm{p})$ are positive.
    \label{def:dd3d}
\end{definition}
For similar reason as in 2D, our definition of digital diffeomorphism involves both schemes shown in Figs.~\ref{f:demonstration_3d}(b) and~(c).
Specifically, each of the two schemes corresponds to a particular piecewise linear transformation.
But for a given digital transformation there are many plausible piecewise linear transformations~(see Fig.~\ref{f:meshes}(c)).
By considering both schemes, we avoid the ambiguity of choosing different schemes for every cube-sized volume.

The central difference approximation of $\jacdet$ in 3D calculates the signed volume of an octahedron with vertices $\bm{p}^{-x}_{t}$, $\bm{p}^{+x}_{t}$, $\bm{p}^{-y}_{t}$, $\bm{p}^{+y}_{t}$, $\bm{p}^{-z}_{t}$, and $\bm{p}^{+z}_{t}$, when the octahedron is simple.
The proof is a straightforward extension of Eq.~\ref{e:area_sum}.
It is easy to show that $\mathcal{D}^{0x}\mathcal{D}^{0y}\mathcal{D}^{0z}\jacdet$ can also have the checkerboard problem or the self-intersection problem as in the 2D case.
Note that Proposition~\ref{p:central_difference_2d} cannot be generalized to 3D because $\bm{p}^{-x}_{t}\bm{p}^{+x}_{t}\bm{p}^{-y}_{t}\bm{p}^{+y}_{t}\bm{p}^{-z}_{t}\bm{p}^{+z}_{t}$ may not be tetrahedralizable~\citep{o1987art} and thus, $\mathcal{D}^{0x}\mathcal{D}^{0y}\mathcal{D}^{0z}\jacdet(\bm{p})>0$ cannot guarantee $\mathcal{R}(\bm{p})$ in 3D is non-empty.
\begin{figure}[!t]
    \centering
    \includegraphics[scale = 0.97]{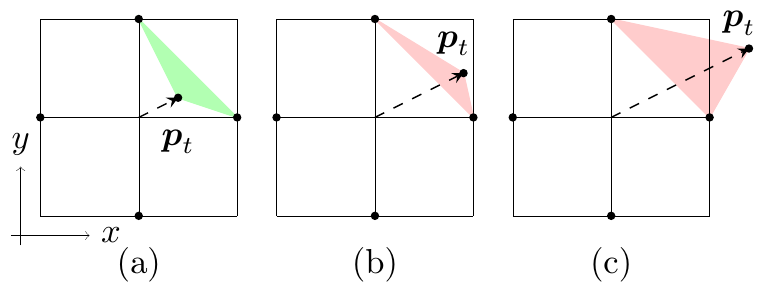}
    \caption{A demonstration of measuring non-diffeomorphic space in 2D. The transformations are visualized as displacement fields. In each of the three subfigures, only the center points are transformed to their corresponding $\bm{p}^{\phantom{y}}_{t}$'s and the other grid points remain fixed. The triangular region corresponds to the forward difference based $\jacdet$ is shown in green if $\jacdet > 0$, otherwise it is shown in red \label{f:nda_demo}}
\end{figure}
\begin{figure*}[!t]
    \centering
    \begin{tabular}{cccc}
        \includegraphics[width=0.22\textwidth]{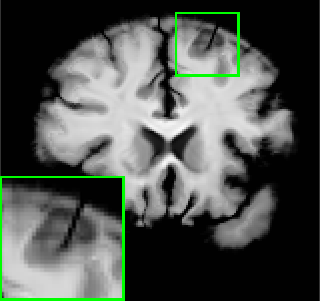}
        &
        \includegraphics[width=0.22\textwidth]{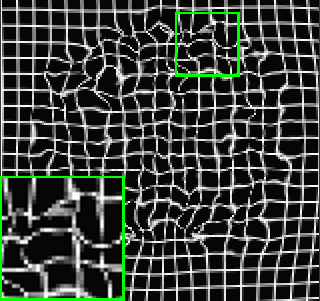}
        &
        \includegraphics[width=0.22\textwidth]{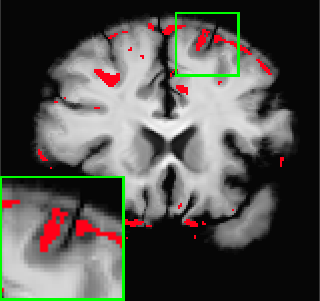}
        &
        \includegraphics[width=0.22\textwidth]{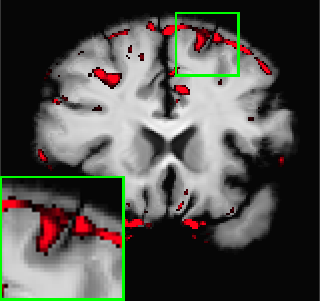}
        \\
        \textbf{(a)} & \textbf{(b)} & \textbf{(c)} & \textbf{(d)}
    \end{tabular}
    \caption{A visualization of the proposed non-diffeomorphic area~(only the displacement inside the coronal plane was considered in this example). \textbf{(a)}~The warped image. \textbf{(b)}~The grid line representation of the transformation~(generated using Voxelmorph~\citep{balakrishnan2019voxelmorph}).
    \textbf{(c)}~The warped image with non-diffeomorphic pixels~(marked in red) as measured by the central difference Jacobian determinant~($\mathcal{D}^{0x}\mathcal{D}^{0y}\jacdet$) highlighted in red. \textbf{(d)}~The warped image with a map overlay indicating the non-diffeomorphic area with brighter shades of red indicating larger non-diffeomorphic area
    \label{f:ixi}}
\end{figure*}

\subsection{Non-diffeomorphic Space Measurement}
For non-diffeomorphic transformations in 3D, the number and percentage of non-diffeomorphic voxels is often used to measure its irregularity.
Specifically, a voxel is considered non-diffeomorphic if its center location $\bm{p}$ has a central difference based Jacobian determinant that is not positive (\ie in 3D: $\mathcal{D}^{0x}\mathcal{D}^{0y} \mathcal{D}^{0z} \jacdet(\bm{p}) \leq 0$).
However, as we have demonstrated in Sections~\ref{s:intro} and \ref{s:methodology}, the central difference based Jacobian determinant underestimates the non-diffeomorphic space, in general.
Given the limitations of the central difference-based Jacobian determinant, we seek an alternative way of evaluating the diffeomorphic property of digital transformations.
We aim to find a quantitative measure that involves the computation of finite difference-based Jacobians and is consistent with our definition of digital diffeomorphism.
As a result, we propose \emph{non-diffeomorphic volume}~(NDV) to quantify the size of the non-diffeomorphic space in 3D caused by a digital transformation.

Given a grid point $\bm{p}$, we denote its eight forward and backward difference based Jacobian determinants as $\vert J_{i}\vert , i\in[1,\ldots,8]$. 
As shown in Section~\ref{s:3d}, each of these determinants is equal to six times the signed volume of a tetrahedron adjacent to $\bm{p}$.
Thus, the non-diffeomorphic volume caused by a given transformation within any tetrahedron is given by
$-\min(\vert J_{i}\vert, 0)/6$,
which is positive only if the tetrahedron is folded.

Following Def.~\ref{def:dd3d}, we use the total volume of all \emph{folded} tetrahedrons from 1)~$\vert J_{i}\vert , i\in[1,\ldots,8]$, 2)~$\vert J^{\star}_{1}\vert (\bm{p})$ given in Eq.~\ref{e:extra_tetrahedron_1}, and 3) $\vert J^{\star}_{2}\vert (\bm{p})$ given in Eq.~\ref{e:extra_tetrahedron_2} to compute NDV exhibited in the entire domain as:
\begin{align}
\text{NDV} & =-\frac{1}{2} \sum_{\bm{p}}
\left[\sum_{i=1}^{8}\frac{ \min (\vert J_{i}\vert (\bm{p}),0)}{6} + \right.
\nonumber\\
&\frac{ \min (\vert J^{\star}_{1}\vert (\bm{p}),0)}{6}
\left. + \frac{ \min (\vert J^{\star}_{2}\vert (\bm{p}),0)}{6}
\right].
\label{e:ndv}
\end{align}
This computation is the average of the two tetrahedralization schemes shown in Figs.~\ref{f:demonstration_3d}(b) and~(c).
While it is possible to report the minimum or maximum achievable non-diffeomorphic volume by tetrahedralizing each cube between voxels based on the specific transformation, doing so would inevitably associate the given digital transformation with one particular piecewise linear transformation.
Therefore, averaging the two tetrahedralization schemes provides a more comprehensive and unbiased measure for a digital transformation. Further discussion on the impact of spatially varying tetrahedralization schemes can be found in Section~\ref{s:dis}.

The proposed NDV is connected to the ideas of simplex counting~(SC)~\citep{yushkevich2010bias} and surface propagation~(SP)~\citep{pai2016deformation}, which are used to assess the degree of volume change on a per-voxel basis.
However, those methods typically use only one scheme to discretize the space.
In contrast, NDV considers the two tetrahedralization schemes shown in Figs.~\ref{f:demonstration_3d}(b) and~(c).
This choice is motivated by the definition of digital diffeomorphism but it also offers a rotation-invariant property.
Specifically, rotating the transformation by any multiple of 90 degrees does not affect its result.
This cannot be achieved using only one scheme.
Further discussion on the relationship and distinction between SC, SP, and the proposed NDV can be found in Section~\ref{s:dis}.

For completeness we note the 2D version of non-diffeomorphic space, which we term \emph{non-diffeomorphic area}~(NDA), follows from Def.~\ref{def:dd2d},
\begin{equation}
    \text{NDA} = -\frac{1}{2} \sum_{\bm{p}}
        \sum_{i=1}^{4} \frac{ \min (\vert J_{i} \vert (\bm{p}),0)}{2},
    \label{e:nda}
\end{equation}
where $\vert J_{i}\vert , i\in[1,\ldots,4]$, are the Jacobian determinants approximated using the four possible combinations of forward and backward differences. 

A demonstration of NDA is provided in Fig.~\ref{f:nda_demo}.
When using the central difference based $\jacdet$, all three cases in Fig.~\ref{f:nda_demo} would be considered diffeomorphic~($\mathcal{D}^{0x} \mathcal{D}^{0y}\mathcal{D}^{0z} \jacdet(\bm{p}) > 0$) because of the checkerboard problem.
The forward difference based $\jacdet$ is able to identify that Figs.~\ref{f:nda_demo}(b) and~(c) exhibit folding, but only NDA can provide the observation that the non-diffeomorphic space caused by the transformation shown in Fig.~\ref{f:nda_demo}(c) is larger than the non-diffeomorphic space in Fig.~\ref{f:nda_demo}(b).

\begin{table*}[!tb]
    \centering
    \caption{Our proposed non-diffeomorphic volume and several other measures on the \texttt{IXI} dataset and the seven comparison algorithms. For `$\mathcal{D}^{0x} \mathcal{D}^{0y} \mathcal{D}^{0z} \jacdet \leq 0$' and `Any $\vert J^{}_{i}\vert \leq0$' we report the mean number of voxels (voxel \#) over the 115 test subjects and the corresponding standard deviation ($\pm$), as well as the percentage (\%) with respect to the brain mask.
    We also report our proposed measure of non-diffeomorphic space---\ie non-diffeomorphic volume~(NDV) in 3D---and the corresponding standard deviations and percentages. Methods are listed in the order in which they were published.
    \label{t:ixi}}
    \adjustbox{max width = 0.99 \textwidth}
    {
        \begin{tabular}{l p{0.25\textwidth} C{0.18\textwidth} C{0.18\textwidth} C{0.18\textwidth}
        }
            \toprule
            && $\mathcal{D}^{0x}\mathcal{D}^{0y}\mathcal{D}^{0z}\jacdet\leq0$
            &
            Any $\vert J^{}_{i}\vert \leq0$
            &
            \cellcolor{green!25}Proposed
            \\
            \cmidrule(lr){3-3} \cmidrule(lr){4-4} \cmidrule(lr){5-5}
            &&
            \# of voxels & \# of voxels & NDV
            \\
            && (\%) & (\%) & (\%)
            \\
            \midrule
            & & $222.2\pm776.1$ & $288.9\pm945.7$ & $10.9\pm45.4$
            \\
            \multirow{-2}{*}{\textit{NiftyReg}} & \multirow{-2}{*}{\citep{modat2010fast}}
            & $(0.01\%)$ & $(0.02\%)$ & $(<0.00\%)$
            \\
            \rowcolor{Gray} & & $5704.7\pm1939.6$ & $12212.9\pm3118.1$ & $1597.4\pm661.5$
            \\
            \rowcolor{Gray}\multirow{-2}{*}{\textit{deedsBCV}} & \multirow{-2}{*}{\citep{heinrich2015multi}}
            & $(0.37\%)$ & $(0.79\%)$ & $(0.10\%)$
            \\
            & & $41233.1\pm8091.3$ & $98241.0\pm17061.1$ & $16261.5\pm2709.3$
            \\
            \multirow{-2}{*}{\textit{Voxelmorph}} & \multirow{-2}{*}{\citep{balakrishnan2019voxelmorph}}
            & $(2.64\%)$ & $(6.26\%)$ & $(1.04\%)$
            \\
            \rowcolor{Gray} & & $44126.5\pm8526.4$ & $99560.0\pm16833.8$ & $17923.6\pm3083.3$
            \\
            \rowcolor{Gray}\multirow{-2}{*}{\textit{Cyclemorph}} & \multirow{-2}{*}{\citep{kim2021cyclemorph}}
            & $(2.83\%)$ & $(6.38\%)$ & $(1.15\%)$
            \\
            & & $0.0\pm0.0$ & $0.1\pm0.7$ & $0.0\pm0.0$
            \\
            \multirow{-2}{*}{\textit{MIDIR}} & \multirow{-2}{*}{\citep{qiu2021learning}}
            & $(<0.00\%)$ & $(<0.00\%)$ & $(<0.00\%)$
            \\
            \rowcolor{Gray} & & $35324.1\pm7887.7$ & $88263.1\pm17261.7$ & $14034.7\pm2902.5$
            \\
            \rowcolor{Gray}\multirow{-2}{*}{\textit{Transmorph}} & \multirow{-2}{*}{\citep{chen2022transmorph}}
            & $(2.26\%)$ & $(5.65\%)$ & $(0.90\%)$
            \\
            & & $8291.4\pm5928.4$ & $20282.7\pm7853.8$ & $822.6\pm248.5$
            \\
            \multirow{-2}{*}{\textit{im2grid}} & \multirow{-2}{*}{\citep{liu2022coordinate}}
            & $(0.53\%)$ &  $(1.3\%)$ & $(0.05\%)$
            \\
        \bottomrule
        \end{tabular}
    }
\end{table*}

\section{Experiments}
\label{s:expt}
We compared the commonly used central difference based Jacobian determinant ($\mathcal{D}^{0x} \mathcal{D}^{0y} \mathcal{D}^{0z} \jacdet$) and our proposed non-diffeomorphic volume~(NDV) using several deformable registration algorithms on two publicly available datasets:
\begin{description}
    \item[\texttt{IXI}:] A total of $576$ T1–weighted brain magnetic resonance~(MR) images from the publicly available IXI dataset were used.
    $403$ scans were used in training for the task of atlas-to-subject registration~\citep{kim2021cyclemorph} and $58$ scans were used for validation.
    The transformations generated from registering an atlas brain MR images to $115$ test scans were evaluated.\\
    
    \item[\texttt{Learn2Reg OASIS}:] We also used the brain T1-weighted MR images from the 2021 Learn2Reg challenge~\citep{hering2021learn2reg,lamontagne2019oasis}.
    Scans were preprocessed using FreeSurfer~\citep{fischl2012freesurfer,hoopes2021hypermorph}. All algorithms were trained using the training set of $414$ scans and the transformations for the $19$ validation pairs were evaluated.
\end{description}

The central difference Jacobian determinant approximation was implemented directly from the 2021 Learn2Reg challenge evaluation script.
The implementation details and hyper-parameters for each of the algorithms were adopted from~\citep{chen2022transmorph} and~\citep{liu2022coordinate}.

A visualization of a result from the \texttt{IXI} dataset is shown in Fig.~\ref{f:ixi}.
Since it is difficult to visualize displacements across slices~(anterior-to posterior direction in this case), only the displacements within the coronal plane were considered.
Thus, we computed the non-diffeomorphic area instead of non-diffeomorphic volume.
For each pixel in Fig.~\ref{f:ixi}(d), a higher red intensity indicates larger non-diffeomorphic area around that pixel. The non-diffeomorphic pixels computed from $\mathcal{D}^{0x}\mathcal{D}^{0y} \jacdet$ are highlighted in Fig.~\ref{f:ixi}(c) for comparison.

The results on the \texttt{IXI} dataset are summarized in Table~\ref{t:ixi} and for the \texttt{Learn2Reg OASIS} in Table~\ref{t:l2r}.
Only the voxels within the brain were considered and the percentages were calculated relative to the brain volume of the fixed image.
In both tables, we report the number of non-diffeomorphic voxels~(\# of voxel) and its percentage~(\%) based on the $\mathcal{D}^{0x}\mathcal{D}^{0y}\mathcal{D}^{0z}\jacdet$.
We also report the number~(and percentage) of voxels that have at least one $\vert J^{}_{i}\vert \leq0$, denoted as `$\text{Any } \vert J^{}_{i}\vert \leq0$' in the tables.
We observe that in most of the cases, there are actually more than twice the number of voxels having $\vert J_i\vert \leq0$ for some finite difference than found using only the central difference.
The differences between `$\mathcal{D}^{0x}\mathcal{D}^{0y}\mathcal{D}^{0z}\jacdet \leq 0$' and `$\text{Any } \vert J^{}_{i}\vert  \leq 0$' highlight that errors in using the central difference approximation are very common in practice.

The proposed average non-diffeomorphic volume~(NDV) and its percentage are shown in the last three columns of Tables~\ref{t:ixi} and~\ref{t:l2r}.
For algorithms that impose strong regularization on the transformations~(\eg \textit{MIDIR}~\citep{qiu2021learning}, \textit{deedsBCV}~\citep{heinrich2015multi}, \textit{NiftyReg}~\citep{modat2010fast}, and \textit{SyN}~\citep{avants2008symmetric}), a small---even zero---NDV is observed.
However, for the deep learning methods that directly output deformation fields~\citep{kim2021cyclemorph, balakrishnan2019voxelmorph, chen2022transmorph, liu2022coordinate}, we usually have higher NDVs.
It is important to note that the results shown in Tables~\ref{t:ixi} and~\ref{t:l2r} do not reflect the accuracy of the algorithms, just the proportion of their deformation that is non-diffeomorphic.
\begin{table*}[!tb]
    \centering
    \caption{Results for the \texttt{Learn2Reg OASIS} dataset. For `$\mathcal{D}^{0x} \mathcal{D}^{0y} \mathcal{D}^{0z} \jacdet \leq 0$' and `Any $\vert J^{}_{i}\vert \leq0$' we report the mean number of voxels (voxel \#) over the 19 validation subject pairs and the corresponding standard deviation ($\pm$), as well as the percentage (\%) with respect to the brain mask.
    We also report our proposed measure of non-diffeomorphic space---\ie non-diffeomorphic volume~(NDV) in 3D---and the corresponding standard deviations and percentages. Methods are listed in the order in which they were published.
    \label{t:l2r}}
    \adjustbox{max width = 0.99 \textwidth}
    {
        \begin{tabular}{l p{0.25\textwidth} C{0.18\textwidth} C{0.18\textwidth} C{0.18\textwidth}
        }
            \toprule
            && $\mathcal{D}^{0x}\mathcal{D}^{0y}\mathcal{D}^{0z}\jacdet\leq0$
            &
            Any $\vert J^{}_{i}\vert \leq0$
            &
            \cellcolor{green!25}Proposed
            \\
            \cmidrule(lr){3-3} \cmidrule(lr){4-4} \cmidrule(lr){5-5}
            &&
            \# of voxels & \# of voxels & NDV
            \\
            && (\%) & (\%) & (\%)
            \\
            \midrule
            & & $205.5\pm331.1$ & $278.2\pm415.3$ & $85.0\pm204.1$
            \\
            \multirow{-2}{*}{\textit{SyN}} & \multirow{-2}{*}{\citep{avants2008symmetric}}
            & $(0.01\%)$ & $(0.02\%)$ & $(0.01\%)$
            \\
            \rowcolor{Gray} & & $40418.4\pm8991.3$ & $94343.8\pm18452.7$ & $18448.2\pm4319.0$
            \\
            \rowcolor{Gray}
            \multirow{-2}{*}{\textit{Voxelmorph}} & \multirow{-2}{*}{\citep{balakrishnan2019voxelmorph}}
            & $(2.84\%)$ & $(6.64\%)$ & $(1.30\%)$
            \\
            & & $31646.6\pm7609$ & $78533.7\pm16113.0$ & $13502.7\pm3779.9$
            \\
            \multirow{-2}{*}{\textit{Transmorph}} & \multirow{-2}{*}{\citep{chen2022transmorph}}
            & $(2.22\%)$ & $(5.52\%)$ & $(0.95\%)$
            \\
            \rowcolor{Gray} & & $22905.6\pm4142.3$ & $161071.8\pm18271.5$ & $8774.7\pm975.6$
            \\
            \rowcolor{Gray}
            \multirow{-2}{*}{\textit{im2grid}} & \multirow{-2}{*}{\citep{liu2022coordinate}}
            & $(1.61\%)$ & $(11.36\%)$ & $(0.62\%)$
            \\
        \bottomrule
        \end{tabular}
    }
\end{table*}

\section{Discussion}
\label{s:dis}
The Jacobian determinant of a transformation is a widely used measure in deformable image registration, but the details of its computation are often overlooked.
In this paper, we focused on the finite difference based approximation of $\jacdet$.
Contrary to what one might expect, the commonly used central difference based $\jacdet$ does not reflect if the transformation is diffeomorphic or not.
Our investigation shows that each of the finite difference approximated $\jacdet$ corresponds to the signed area of a triangle in 2D or the signed volume of a tetrahedron in 3D when the digital transformations are assumed to be piecewise linear.
Following this, we propose the definition of a digital diffeomorphism that allows diffeomorphisms---a concept in continuous domain---to be applied to digital transformations.
It solves several problems that are inherent in the central difference based $\jacdet$.
We further propose to use non-diffeomorphic volume to measure the irregularity of 3D transformations and non-diffeomorphic area for 2D transformations.
As demonstrated in Fig.~\ref{f:nda_demo} and Fig.~\ref{f:ixi}, our proposed approach measures the \emph{severity} of the irregularity whereas the commonly used central difference based $\jacdet$ is only a binary indicator of folding~(also with errors).
The transformation shown in Fig.~\ref{f:nda_demo}(b) is obviously more favorable than the one shown in Fig.~\ref{f:nda_demo}(c) in terms of regularity. As such, it is important for us to be able to draw distinctions between these two scenarios.

The non-diffeomorphic measures presented in Eqs.~\ref{e:ndv} and~\ref{e:nda} are averages of two choices of tetrahedralization of the volume.
It is possible, however, to tetrahedralize each cube between voxels based on the specific registration outcome, for example, to yield the minimum or maximum achievable non-diffeomorphic volume. These measures must be computed for the entire image domain since the brain mask is defined at voxel level.
For the \texttt{IXI} dataset, the average non-diffeomorphic volume for the best case is 33227~voxel$^3$ for \textit{Voxelmorph} and 31488~voxel$^3$ for \textit{Transmorph} while the average non-diffeomorphic Volume for the worse case is 41903~voxel$^3$ for \textit{Voxelmorph} and 40964~voxel$^3$ for \textit{Transmorph}.
In comparison, our proposed NDV is 37565~voxel$^3$ for \textit{Voxelmorph} and 36226~voxel$^3$ for \textit{Transmorph} for the entire image domain.
This result shows that, at least for these two algorithms, the non-diffeomorphic volume cannot be made substantially different from the average value that we specified in Eq.~\ref{e:ndv} by alternative selection of tetrahedralization.

The idea of discretizing the space using triangles or tetrahedrons has been presented before for regularizing deformable transformations and deformation-based volume change estimation~\citep{pai2016deformation, holland2011nonlinear, yushkevich2010bias}.
It was previously believed that computing the volume change in discretized space using tetrahedrons is more accurate than using Jacobian determinants because the latter involves finite difference approximation~\citep{yushkevich2010bias}.
Our analysis shows that discretizing the space using triangles or tetrahedrons is in fact the consequence of finite difference approximation of Jacobian determinants. 
Therefore, the two approaches are equivalent when the corresponding finite differences are chosen for a given partition of space.
Haber~\etal~\citep{haber2004numerical, haber2007image} and Burger~\etal~\citep{burger2013hyperelastic} used the volume of triangles in 2D or tetrahedrons in 3D as a regularization term for their registration algorithm.
Initially, Haber~\etal~proposed a hard equality constraint~\citep{haber2004numerical} that enforces the preservation of the discretized volume (or area in 2D) of every deformed box. 
Recognizing that this approach could not detect ``twists" (\emph{i.e.}, folding), they later proposed an inequality constraint~\citep{haber2007image} that calculates the volumes of the tetrahedrons to prevent such twists by imposing positive volumes of the tetrahedrons. 
However, their methods only account for a single combination of Jacobian determinants (as shown in Fig.~\ref{f:meshes}(a)), which is insufficient to guarantee a digital diffeomorphism.
Moreover, these previous works were motivated by the fact that calculating the volume of a ``twisted'' polygon or octahedron is difficult.
Our analysis on the central difference approximated $\jacdet$ explains why using a polygon or octahedron is inaccurate.

With respect to deep network based registration methods, many ensure that the output of their network is diffeomorphic, by adopting a scaling-and-squaring layer as the output layer~\citep{dalca2019unsupervised, chen2022transmorph, hoopes2021hypermorph}.
However, several works have reported that the scaling-and-squaring approach produces voxels with negative central difference based Jacobian determinant. 
The analysis presented in this paper explains why the scaling-and-squaring approach cannot guarantee a folding-free digital transformation.
Specifically, when composing two digital transformations, one of the transformation needs to be sampled at non-grid locations, which usually involves bilinear or trilinear interpolation.
These interpolation methods, however, are inconsistent with the piecewise linear transformation that is implicitly assumed by the finite difference based Jacobian determinant computation.
As a result, the sampling process can introduce folding and result in locations with negative Jacobian determinant.

For recent deep learning based registration methods, our digital diffeomorphism criteria has the potential to be used as a loss function to improve the smoothness and promote digitally diffeomorphic transformations~\citep{mok2020fast}. This is a promising direction for future research.

\backmatter


\section*{Statements and Declarations}
\noindent\textbf{Data availability}
The datasets analysed during the current study are available in the OASIS repository, \url{https://www.oasis-brains.org/} and the IXI repository, \url{https://brain-development.org/ixi-dataset/}

\noindent\textbf{Ethical approval declarations}
This work involved human subjects or animals in its research. 
Approval of all ethical and experimental procedures and protocols was granted by the relevant local Institutional Review Boards, and performed in line with the Declaration of Helsinki.

\noindent\textbf{Conflict of interest}
The authors have no competing interests to declare
that are relevant to the content of this article

\noindent\textbf{Funding}
This work was supported in part by the National Institute of Health~(NIH) National Eye Institute grant R01-EY032284~(PI: J.L. Prince).


\bibliography{refs}


\end{document}